\newtheorem{theorem}{Theorem}[section]
\newtheorem{lemma}[theorem]{Lemma}
\newtheorem{claim}[theorem]{Claim}
\newtheorem{definition}[theorem]{Definition}
\newcommand{\ignore}[1]{}
\newcommand{\cE}{{\cal E}}
\newcommand{\cF}{\mathcal{F}}
\newcommand{\cG}{\mathcal{G}}
\newcommand{\cP}{\mathcal{P}}
\newcommand{\cW}{{\cal W}}
\newcommand{\R}{\mathbb R}
\newcommand{\eps}{\varepsilon}
\newcommand{\poly}{\mathrm{poly}}
\newcommand{\bw}{\boldsymbol{w}}
\newcommand{\bp}{\boldsymbol{p}}
\newcommand{\bu}{\boldsymbol{u}}
\newcommand{\bv}{\boldsymbol{v}}
\newcommand{\bx}{\boldsymbol{x}}
\newcommand{\by}{\boldsymbol{y}}
\newcommand{\bq}{\boldsymbol{q}}
\newcommand{\bP}{\boldsymbol{P}}
\newcommand{\bQ}{\boldsymbol{Q}}
\newcommand{\bR}{\boldsymbol{R}}
\newcommand{\NN}{\mathbb{N}}
\newcommand{\RR}{\mathbb{R}}
\newcommand{\EX}{\hbox{\bf E}}
\newcommand{\otilde}{\widetilde{O}}
\newcommand{\Sec}[1]{\hyperref[sec:#1]{\S\ref*{sec:#1}}} 
\newcommand{\Eqn}[1]{\hyperref[eq:#1]{(\ref*{eq:#1})}} 
\newcommand{\Fig}[1]{\hyperref[fig:#1]{Fig.\,\ref*{fig:#1}}} 
\newcommand{\Tab}[1]{\hyperref[tab:#1]{Tab.\,\ref*{tab:#1}}} 
\newcommand{\Thm}[1]{\hyperref[thm:#1]{Theorem\,\ref*{thm:#1}}} 
\newcommand{\Fact}[1]{\hyperref[fact:#1]{Fact\,\ref*{fact:#1}}} 
\newcommand{\Lem}[1]{\hyperref[lem:#1]{Lemma\,\ref*{lem:#1}}} 
\newcommand{\Prop}[1]{\hyperref[prop:#1]{Prop.~\ref*{prop:#1}}} 
\newcommand{\Cor}[1]{\hyperref[cor:#1]{Corollary~\ref*{cor:#1}}} 
\newcommand{\Conj}[1]{\hyperref[conj:#1]{Conjecture~\ref*{conj:#1}}} 
\newcommand{\Def}[1]{\hyperref[def:#1]{Definition~\ref*{def:#1}}} 
\newcommand{\Alg}[1]{\hyperref[alg:#1]{Alg.~\ref*{alg:#1}}} 
\newcommand{\Ex}[1]{\hyperref[ex:#1]{Ex.~\ref*{ex:#1}}} 
\newcommand{\Clm}[1]{\hyperref[clm:#1]{Claim~\ref*{clm:#1}}} 
\newcommand{\Step}[1]{\hyperref[step:#1]{Step~\ref*{step:#1}}} 
\newcommand{\Cond}[1]{\hyperref[cond:#1]{Condition~\ref*{cond:#1}}}
\newcommand{\Akash}[1]{{\color{black!60!green} Akash: #1}}
\newcommand{\anchor}{{\tt Anchor}}
\newcommand{\isminorfree}{{\tt IsMinorFree} }
\newcommand{\localsearch}{{\tt LocalSearch}}
\newcommand{\estclip}{{\tt EstClip}}
\newcommand{\accept}{\textsf{ACCEPT}\xspace}
\newcommand{\reject}{\textsf{REJECT}\xspace}
\newcommand{\len}{\ell}
\newcommand{\wpr}[3]{p_{#1, #2}(#3)}
\newcommand{\wvec}[2]{\mathbf{p}_{#1, #2}}
\newcommand{\hwvec}[2]{\widehat{\bp}_{#1, #2}}
\newcommand{\qrwvec}[3]{\mathbf{q}_{[#3],#1,#2}}
\newcommand{\walkdist}[1]{\cW_#1}
\newcommand{\setOfSuchThat}[2]{ \left\{\; #1 \;\colon\; #2\; \right\} }
\newcommand{\RS}{\texttt{KKR}}
\newcommand{\tS}{\widetilde{S}}
\newcommand{\prw}[3]{p_{#1, #3}(#2)}
\newcommand{\prwprvec}[2]{\mathbf{p'}_{#1, #2}}
\newcommand{\clip}[2]{\mathrm{cl}(#1,#2)}
\newcommand{\hclip}[2]{{\widehat{cl}}(#1, #2)}
\newcommand{\projw}[2]{\boldsymbol{\tau}_{#1, #2}}
\newcommand{\projwp}[3]{{\tau}_{#1, #3}(#2)}
\newcommand{\projp}[3]{\hat{p}_{#1, #2}(#3)}
\newcommand{\runtime}{O\left(\eps^{-42 \cdot 2^{200r^4}}\right)}
\begin{document}
	\title{Random walks and forbidden minors II: A $\poly(d\eps^{-1})$-query tester for minor-closed properties of bounded degree graphs}
\author{Akash Kumar\thanks{Department of Computer Science, Purdue University. {\href{mailto:akumar@purdue.edu}{akumar@purdue.edu}} (Supported by NSF CCF-1618981.)}
\and C. Seshadhri\thanks{Department of Computer Science, University of California, Santa Cruz. {\href{mailto:sesh@ucsc.edu}{sesh@ucsc.edu}}
(Supported by NSF TRIPODS grant CCF-1740850 and NSF CCF-1813165)}
\and Andrew Stolman\thanks{Department of Computer Science, University of California, Santa Cruz. { \href{mailto:astolman@ucsc.edu}{astolman@ucsc.edu}} (Supported by NSF TRIPODS grant CCF-1740850)}
}

\begin{titlepage}

\date{}
\maketitle
\begin{abstract}
Let $G$ be a graph with $n$ vertices and maximum degree $d$. Fix some minor-closed property $\mathcal{P}$ (such as planarity).
We say that $G$ is $\varepsilon$-far from $\mathcal{P}$ if one has to remove $\varepsilon dn$ edges to make it have $\mathcal{P}$.
The problem of property testing $\mathcal{P}$ was introduced in the seminal work of Benjamini-Schramm-Shapira (STOC 2008)
that gave a tester with query complexity triply exponential in $\varepsilon^{-1}$.
Levi-Ron (TALG 2015) have given the best tester to date, with a quasipolynomial (in $\varepsilon^{-1}$) query complexity.
It is an open problem to get property testers whose query complexity is $\poly(d\varepsilon^{-1})$,
even for planarity.

In this paper, we resolve this open question. For any minor-closed property,
we give a tester with query complexity $d\cdot \poly(\varepsilon^{-1})$. 
The previous line of work on (independent of $n$, two-sided) testers is primarily combinatorial.
Our work, on the other hand, employs techniques from spectral graph theory. This paper
is a continuation of recent work of the authors (FOCS 2018) analyzing random walk algorithms
that find forbidden minors.
\end{abstract}

\thispagestyle{empty}

\end{titlepage}

\section{Introduction}
\newcommand{\ras}{\rightarrow}

The classic result of Hopcroft-Tarjan gives a linear time algorithm for deciding planarity~\cite{HT74}.
As the old theorems of Kuratowski and Wagner show, planarity is characterized by the non-existence
of $K_5$ and $K_{3,3}$ minors~\cite{K30, W37}. The monumental graph minor theorem of Robertson-Seymour
proves that any property of graphs closed under minors can be expressed by the
non-existence of a finite list of minors~\cite{RS:12, RS:13, RS:20}. Moreover,
given a fixed graph, $H$, the property of being $H$-minor-free can be decided
in quadratic time~\cite{KKR:12}. Thus, any minor-closed property of graphs can be decided in quadratic time.

What if an algorithm is not allowed to read the whole graph?
This question was first addressed in the seminal result of Benjamini-Schramm-Shapira (BSS)
in the language of property testing~\cite{BSS08}. 
Consider the model of random access to a graph adjacency list, as 
introduced by Goldreich-Ron~\cite{GR02}. Let $G = (V,E)$ be a graph
where $V = [n]$ and the maximum degree is $d$.
We have random access to the list through \emph{neighbor queries}.
There is an oracle that, given $v \in V$ and $i \in [d]$,
returns the $i$th neighbor of $v$ (if no neighbor exists, it returns $\bot$).

For a property $\cP$ of graphs with degree bound $d$, the distance of $G$ to $\cP$
is the minimum number of edge additions/removals required to make $G$ have $\cP$,
divided by $dn$. We say that $G$ is $\eps$-far from $\cP$ if the distance to $\cP$
is more than $\eps$.  
A property tester for $\cP$ is a randomized procedure that takes as input (query access to) $G$ and a proximity parameter, $\eps > 0$.
If $G \in \cP$, the tester must accept with probability at least $2/3$. If $G$ is $\eps$-far from $\cP$,
the tester must reject with probability at least $2/3$. A tester is one-sided if it accepts $G \in \cP$ with probability $1$.

Let $\cP$ be some minor-closed property such as planarity. 
BSS proved the remarkable result that any such $\cP$ is testable
in time independent of $n$. Their query complexity
was triply exponential in $(d/\eps)$. Hassidim-Kelner-Nguyen-Onak
improved this complexity to singly exponential, introducing the novel concept of partition oracles~\cite{HKNO}.
Levi-Ron gave a more efficient analysis, proving the existence of testers
with query complexity quasi-polynomial in $(d/\eps)$~\cite{LR15}.  
For the special cases of outerplanarity
and bounded treewidth, $\poly(d/\eps)$ query testers are known~\cite{YI:15,EHNO11}.

It has been a significant
open problem to get $\poly(d/\eps)$ query testers for all minor-closed properties. 
In Open Problem 9.26 of Goldreich's recent book
on property testing, he states the ``begging question of whether [the query complexity bound of testing minor-closed
properties]
can be improved to a polynomial [in $1/\eps$]"~\cite{G17-book}. 
Even for classic case of planarity, this was unknown. 

In this paper, we resolve this open problem.
\begin{theorem} \label{thm:open} Let $\cP$ be any minor-closed property of graphs with degree bound $d$.
There exists a (two-sided) tester for $\cP$ that runs in $d^2\cdot \poly(\eps^{-1})$ time.
\end{theorem}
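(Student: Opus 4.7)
The plan is to build directly on the random-walk framework developed in the authors' FOCS 2018 paper and to replace the quasi-polynomial slack of prior testers by a sharp spectral dichotomy.

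First, by the Robertson-Seymour theorem, any minor-closed property $\cP$ is characterized by a finite family of forbidden minors $H_1, \ldots, H_k$, each on at most $r = r(\cP)$ vertices. A routine averaging argument shows that if $G$ is $\eps$-far from $\cP$, then $G$ is $\Omega(\eps / k)$-far from $H_i$-minor-free for some $i$. Since $k$ and $r$ are constants depending only on $\cP$, it suffices to construct, for each fixed $H$ on at most $r$ vertices, a $d \cdot \poly(\eps^{-1})$-query tester that accepts $H$-minor-free graphs and rejects any graph that is $\Omega(\eps)$-far from $H$-minor-free. The final tester runs these subroutines for each $H_i$ and rejects if any of them rejects.

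Fix such an $H$. The tester samples $\poly(\eps^{-1})$ random seed vertices and, at each seed $v$, runs truncated random walks at several stratified lengths $\ell_1 < \ell_2 < \cdots$. From the observed walk distributions $\wvec{v}{\ell}$ it invokes a $\findpath$-style subroutine to extract many short vertex-disjoint paths emerging from $v$, and then a $\findclique$-style subroutine to stitch path-bundles from $r$ distinct seeds into the branch sets and connecting paths of a topological $H$-minor. Accept iff no $H$-minor is ever constructed. The structural claim we must prove is: if $G$ is $\Omega(\eps)$-far from $H$-minor-free, then an $\Omega(\eps)$-fraction of the vertices are ``good seeds'' from which this extract-and-stitch pipeline succeeds with probability at least $\poly(\eps)$.

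The central obstacle is establishing this structural claim with $\poly(\eps^{-1})$ rather than quasi-polynomial dependence. The plan is a spectral dichotomy at each length scale $\ell$: either $\wvec{v}{\ell}$ has sufficiently small $\ell_2$ norm, which certifies via a pigeonhole / flow argument that $v$ is connected to many distinct vertices by nearly vertex-disjoint walks of length $\ell$, yielding the required paths for $\findpath$; or the walk mass stays concentrated inside a small low-conductance set $S \ni v$, in which case $S$ locally looks $H$-minor-free only if the global graph was, so one recurses on the (smaller) induced instance on $S$. A potential/dimension argument telescoped over $O(\log \eps^{-1})$ length scales collapses the total cost to $\poly(\eps^{-1})$; each random-walk step costs $O(d)$ oracle calls, which accounts for the single factor of $d$ in the query complexity (and the second factor of $d$ in the running time comes from internal bookkeeping on neighbor lists). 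The hardest technical piece I expect is the quantitatively tight conversion from ``walk distribution has small $\ell_2$ norm'' to ``many vertex-disjoint anchored paths of the right length exist'' — this is exactly where prior partition-oracle-based analyses paid a quasi-polynomial price, and where the spectral viewpoint inherited from the FOCS 2018 paper must be pushed hardest.
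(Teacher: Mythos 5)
Your reduction via Robertson--Seymour to testing $H$-minor-freeness for each fixed forbidden minor $H$ matches the paper, and so does the general instinct to study $\ell_2$-norms of short random-walk distributions. After that, however, your plan diverges into territory the paper explicitly avoids, and the divergence is fatal.

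Your tester proposes to \emph{construct} a topological $H$-minor by extracting vertex-disjoint paths from constant-length random walks (the \findpath/\findclique pipeline inherited from KSS's one-sided tester). This cannot work with walk lengths that are $\poly(\eps^{-1})$, independent of $n$. Consider a $3$-regular expander: it is $\Omega(1)$-far from $K_5$-minor-free, its walk distributions at every constant length have tiny $\ell_2$-norm (so your ``spectral dichotomy'' places it squarely in the minor-finding branch), yet any $K_5$ model has branch sets at pairwise distance $\Theta(\log n)$, so constant-length walks simply cannot reach the material needed to assemble a minor. This is precisely why one-sided testers for $H$-minor-freeness require $\Omega(\sqrt n)$ queries, and it is why the KSS one-sided tester uses walk length $n^{\delta}$, not $\poly(\eps^{-1})$. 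Your ``potential argument telescoped over $O(\log\eps^{-1})$ length scales'' does not rescue this: $O(\log\eps^{-1})$ doublings of a constant starting length is still a constant, far short of $\log n$. The recursion in your other branch has a similar problem: the low-conductance set $S$ containing most of the walk mass can have size $\Theta(n)$, so ``recurse on the smaller induced instance on $S$'' does not terminate after constantly many rounds.

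The paper's way around this is conceptually different and genuinely two-sided. When the (clipped) $\ell_2$-norm is small for a non-negligible fraction of vertices, the tester does \emph{not} try to find a minor; it rejects outright. The justification is the contrapositive of a completeness lemma: by Alon--Seymour--Thomas, $H$-minor-free bounded-degree graphs are hyperfinite, so after removing a $\poly(\eps)$-fraction of edges the components have size $\poly(\eps^{-1})$, and a $\poly(\eps^{-1})$-length walk from a typical vertex is trapped in its component and therefore has large clipped $\ell_2$-norm. Small norm is thus \emph{evidence of a minor}, not an invitation to go build one. Only in the complementary case --- walks trapped, norms large --- does the paper descend into local partitioning (via the projected Markov chain and Lov\'asz--Simonovits), produce a decomposition into $\poly(\eps^{-1})$-size pieces discoverable by local random walks, and then run the \emph{exact} Kawarabayashi--Kobayashi--Reed algorithm on the small discovered subgraph; no random-walk-based minor construction ever happens. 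You also do not mention the clipping operation, which is the technical device that makes the large-norm branch survive outliers and is necessary for the local-partitioning conclusion; the raw $\ell_2$-norm is too fragile for this. So the gap is real: as written, your small-norm branch attempts something provably impossible at constant walk length, and your recursion has unbounded depth.
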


Thus, properties such as planarity, series-parallel graphs, 
embeddability in bounded genus surfaces, linkless embeddable, and bounded treewidth are all testable in time $d^2\cdot \poly(\eps^{-1})$.

By the graph minor theorem of Robertson-Seymour~\cite{RS:20},
\Thm{open} is a corollary of our main result for testing $H$-minor-freeness.
As alluded to earlier, for any minor-closed property
$\cP$, there exists a finite list of graphs $\{H_1, H_2, \ldots, H_b\}$ satisfying
the following condition. A graph $G$ is in $\cP$ iff for all $i \leq b$, $G$ does not
contain an $H_i$-minor. Let $\cP_{H_i}$ be the property of being $H_i$-minor-free.
The characterization implies that if $G$ is $\eps$-far from $\cP$, there exists $i \leq b$
such that $G$ is $\Omega(\eps)$-far from $\cP_{H_i}$. Thus, property testers for $H_i$-minor
freeness imply property testers for $\cP$ (with constant blowup in the proximity parameter).

Our main quantitative theorem follows.

\begin{theorem} \label{thm:main} There is an absolute constant
$c$ such that the following holds.
Fix a graph $H$ with $r$ vertices. The property
of being $H$-minor-free is testable in $d(r/\eps)^{c}$ queries and
$d^2(r/\eps)^{2c}$ time.
\end{theorem}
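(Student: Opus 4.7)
The plan is a sample-and-verify tester. Draw $N = (r/\eps)^{O(1)}$ uniformly random starting vertices from $V$; from each sampled vertex $v$, run a local cluster extraction subroutine that, using $d \cdot (r/\eps)^{O(1)}$ neighbor queries, returns either $\bot$ or a vertex set $S_v$ of size at most $(r/\eps)^{O(1)}$. Whenever a nontrivial $S_v$ is returned, the tester reads $G[S_v]$ and feeds it to the offline $H$-minor detector of~\cite{KKR:12}, which runs in $|S_v|^2$ time. The tester rejects iff one of the induced subgraphs contains $H$ as a minor. Because $H$-minor containment is subgraph-monotone, this tester never rejects a graph that is $H$-minor-free, so the task is to show that $\eps$-far graphs are rejected with probability at least $2/3$.

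The analysis splits into three pieces. Step~1 (structural lemma): if $G$ is $\eps$-far from $H$-minor-free, then a $(\eps/r)^{O(1)}$ fraction of vertices lie in \emph{witness clusters} --- low-conductance sets $C$ of size $\poly(r/\eps)$ whose induced subgraph $G[C]$ already contains $H$ as a minor. Intuitively, an edge-removal budget of $\eps dn$ cannot simultaneously destroy every $H$-minor and also hyperfinitely shatter the graph at the relevant scale, so many of the disjoint $H$-minors forced by far-from-ness must survive enclosed by small edge boundaries. Step~2 (walk concentration): starting at any vertex inside such a witness cluster, the distribution of a length-$t$ random walk for $t \le \poly(r/\eps)$ places an inverse-polynomial fraction of its mass back on $C$; this is a local Cheeger / Lov\'asz--Simonovits-type analysis and is essentially the spectral machinery developed in the authors' FOCS 2018 paper. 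Step~3 (estimation and recovery): one can approximate the walk distribution entry-wise from $\poly(r/\eps)$ neighbor queries, threshold it, and recover $C$ (or a modestly larger superset) with high probability.

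Combining these, a union bound over the $N$ samples guarantees that with probability $\ge 2/3$ at least one sampled $v$ lies in a witness cluster, the local extraction returns that cluster, and the offline minor-detector certifies $H$. The total query complexity is $N \cdot d \cdot (r/\eps)^{O(1)} = d \cdot (r/\eps)^{O(1)}$, and the running time is dominated by the $|S_v|^2$ offline checks, matching the stated $d^2 (r/\eps)^{O(1)}$ bound.

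The main obstacle is Step~1. Previous testers~\cite{BSS08,HKNO,LR15} establish hyperfinite-partition-type statements that extract small pieces after removing few edges, but they do not simultaneously control the conductance of those pieces, and it is precisely this gap that forces their query complexities to be (quasi-)exponential in $\eps^{-1}$. Polynomial dependence demands witnesses that are simultaneously (i) $H$-minor-containing, (ii) of polynomial size, and (iii) of polynomially small conductance. My plan for this is a potential/double-counting argument: view the $\eps dn$ minor-killing edges as a budget, observe that each one sits near an otherwise intact $H$-minor, and use a local-partitioning primitive around each such edge to carve out a low-conductance region that still carries the minor. Making the conductance bound quantitative enough to feed into Step~2 without blowing up the polynomial exponent is the delicate ingredient.
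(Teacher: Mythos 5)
Your plan produces a \emph{one-sided} tester: it rejects only upon actually exhibiting an $H$-minor inside some $G[S_v]$, so it can never reject an $H$-minor-free graph. This is already impossible at the stated query budget. Czumaj et al.\ showed that one-sided testing of $H$-minor-freeness for any $H$ containing a cycle requires $\Omega(\sqrt{n})$ queries, and the present paper explicitly notes this barrier. Any $\poly(d,r/\eps)$-query tester for, say, planarity must therefore be two-sided, i.e.\ must sometimes reject without holding a minor in hand. Your Step~1 is where the impossibility surfaces concretely: a $3$-regular expander of girth $\Theta(\log n)$ is $\Omega(1)$-far from $K_5$-minor-free, yet it contains no low-conductance sets of size $\poly(r/\eps)$ whatsoever --- every polynomially small set has conductance bounded below by a constant. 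So there are simply no ``witness clusters'' to find, and your double-counting argument cannot produce them; moreover the $K_5$-minors present in an expander have branch sets of diameter $\Theta(\log n)$, unreachable by $\poly(1/\eps)$-length exploration.

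The paper resolves exactly this by flipping the logic. It does not try to prove that far-from-minor-free graphs have discoverable low-conductance minor-carrying pieces. Instead it proves the contrapositive on the completeness side: if $G$ \emph{is} $H$-minor-free, then by the Alon--Seymour--Thomas decomposition almost every vertex $s$ has a \emph{large clipped} $\ell_2$-norm $\|\clip{\wvec{s}{\ell}}{3/8}\|_2^2 \geq \ell^{-7}$ for $\ell = \poly(r/\eps)$ (Lemma~\ref{lem:clip:new}), because short walks are trapped in small AST components. The tester therefore \emph{rejects} whenever it sees too many vertices with small clipped norm --- this is the two-sided escape hatch, and it is precisely what catches the expander: walks there spread rapidly, the clipped norms are tiny, and the tester rejects with no minor exhibited. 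Only in the complementary regime (most clipped norms large) does the paper run a local-search step, and there the Lov\'asz--Simonovits / Kale--Peres--Seshadhri projected-chain machinery (Lemma~\ref{lem:partitioning}, Theorem~\ref{thm:partition}) produces the discoverable low-conductance decomposition you wanted in Step~1, so that far-from-ness forces a polynomial fraction of pieces to carry minors. In short: your Steps~2 and~3 align broadly with the paper's local-partitioning and local-search machinery, but Step~1 as stated is false, and the fix is not a sharper counting argument but a structurally different, two-sided test that rejects on spectral evidence alone --- the clipped-norm thresholding --- before any attempt to exhibit a minor.
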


We stress that $c$ is independent on $r$. Currently, our value
of $c$ is likely more than $100$, and
we have not tried to optimize the exponent of $\eps$.
We believe that 
significant improvement is possible, even by just tightening
the current analysis.
It would be of significant interest to get a better bound, even for the case of planarity.

\subsection{Related work} \label{sec:related}

Property testing on bounded-degree graphs is a large topic, and we point the reader to Chapter 9
of Goldreich's book~\cite{G17-book}. Graph minor theory is immensely deep,
and Chapter 12 of Diestel's book is an excellent reference~\cite{R-book}. 
We will focus on the work regarding property testing of $H$-minor-freeness.

As mentioned earlier, this line of work started with Benjamini-Schramm-Shapira~\cite{BSS08}.
Their tester basically approximates the frequency of all subgraphs with radius $2^{1/\eps}$,
which leads to the large dependence in $d/\eps$.
Central to their result (and subsequent) work is the notion of \emph{hyperfiniteness}.
A hyperfinite class of graphs has the property that the removal of a small constant fraction
of edges leaves connected components of constant size. Hassidim-Kelner-Nguyen-Onak
design \emph{partition oracles} for hyperfinite graphs to get improved testers~\cite{HKNO,LR15}. These oracles are local procedures
that output the connected component that a vertex lies in, without explicit knowledge of any global partition. This is extremely challenging
as one has to maintain consistency among different queries. The final construction is
an intricate recursive procedure that makes $\exp(d/\eps)$ queries. Levi-Ron gave 
a significantly simpler and more efficient analysis leading to their query complexity
of $(d\eps^{-1})^{\log \eps^{-1}}$. Newman-Sohler show how partition oracles
lead to testers for any property of hyperfinite graphs~\cite{NS13}.

Given the challenge of $\poly(d\eps^{-1})$ testers
for planarity, there has been focus on other minor-closed properties.
Yoshida-Ito give such a tester for outerplanarity~\cite{YI:15}, which was subsumed by a $\poly(d\eps^{-1})$
tester by Edelman et al for bounded treewidth graphs~\cite{EHNO11}. Nonetheless,
$\poly(d\eps^{-1})$ testers for planarity remained open.

Unlike general (two-sided) testers,
one-sided testers for $H$-minor-freeness must have a dependence on $n$. BSS
conjectured that the complexity of testing $H$-minor-freeness (and specifically planarity)
is $\Theta(\sqrt{n})$. Czumaj et al~\cite{C14} showed
such a lower bound for any $H$ containing a cycle, and gave an $\otilde(\sqrt{n})$
tester when $H$ is a cycle. 
Fichtenberger-Levi-Vasudev-W\"{o}tzel give an $\otilde(n^{2/3})$ tester for $H$-minor-freeness
when $H$ is $K_{2,k}$, the $(k\times 2)$-grid or the $k$-circus graph~\cite{FLVW:17}. 
Recently, Kumar-Seshadhri-Stolman (henceforth KSS) nearly resolved the BSS conjecture
with an $n^{1/2+o(1)}$-query one-sided tester for $H$-minor-freeness~\cite{KSS:18}.
The underlying approach uses the proof strategy of the bipartiteness
tester of Goldreich-Ron~\cite{GR99}.

The body of work on two-sided (independent of $n$) testers
is primarily combinatorial. The proof of \Thm{main}
is a significant deviation from this line of work, and 
is inspired by the spectral graph theoretic methods in KSS. 
As we explain in the next section, we do not require the full machinery of KSS,
but we do follow the connections between random walk behavior and graph minors.
The tester of \Thm{main} is simpler than
those of Hassidim et al and Levi-Ron, who use recursive
algorithms to construct partition oracles~\cite{HKNO,LR15}.

\subsection{Main ideas} \label{sec:ideas}

Let us revisit the argument of KSS, that gives an $n^{1/2+o(1)}$-query one-sided
tester for $H$-minor-freeness. We will take great liberties with parameters,
to explain the essence. The proof of \Thm{main} is inspired by the approach
in KSS, but the proof details deviate significantly. We discover that the full
machinery is not required. But the main idea is to exploit connections between
random walk behavior and graph minor-freeness.

First, we fix a random walk length $\ell = n^{\delta} \gg 1/\eps$, 
for small constant $\delta > 0$. One of the building blocks is a random walk procedure
that finds $H$-minors by performing $\sqrt{n}\cdot\poly(\ell)$ random walks of length $\ell$.
For our purposes, it is not relevant what the algorithm is, and we simply refer to this as the ``random walk procedure".
One of the significant concepts in KSS is the notion of a \emph{returning random walk}. For any subset of vertices $S \subset V$, an $S$-returning random walk of length $\ell$ is a random walk that starts from $S$ and ends at $S$. For any vertex $s \in S$, we use $\qrwvec{s}{\ell}{S}$ to denote the $|S|$-dimensional vector of probabilities of an $S$-returning walk of length $\ell$ starting from $s$.

KSS proves the following
two key lemmas. We use $c$ to denote some constant that depends only on $H$.
\begin{asparaenum}
    \item Suppose there is a subset $S \subseteq V$, $|S| \geq n/\ell$, with the following property.
    For at least half the vertices $s \in S$, $\|\qrwvec{s}{\ell}{S}\| \leq \ell^{-c}$.
    Then, whp, the $\sqrt{n}\cdot\poly(\ell)$-time random walk procedure finds an $H$-minor.
    \item Suppose there is a subset $S \subseteq V$, $|S| \geq n/\ell$, with the following property.
    For at least half the vertices $s \in S$, $\|\qrwvec{s}{\ell}{S}\| > \ell^{-c}$.
    Then, for every such vertex $s$, there is a cut of conductance at most $1/\ell$ contained in $S$,
    where all vertices (in the cut) are reached with probability at least $1/\poly(\ell)$ by $\ell$-length $S$-returning walks from $s$.
\end{asparaenum}

To get a one-sided tester, we run the $\sqrt{n}\cdot\poly(\ell)$ random walk procedure.
If it does not find an $H$-minor, then the antecedent of the second part above is true for all $S$ such that $|S| \geq n/\ell$.
The consequent basically talks of local partitioning \emph{within} $S$, even
though random walks are performed in the whole graph $G$. The statement
is proven using arguments from local partitioning theorems of Spielman-Teng~\cite{ST12}.
By iterating the argument, we can prove the existence of a set of $\eps dn$ edges, whose removal breaks $G$
into connected components of size at most $\poly(\ell)$. Moreover, a superset
of any piece can be ``discovered" by performing $\poly(\ell)$ random walks (of length $\ell$)
from some starting vertex. Roughly speaking, each piece has a distinct starting vertex.
Thus, if $G$ was $\eps$-far from being $H$-minor-free,
an $\eps$-fraction (by size) of the pieces will contain $H$-minors. A procedure that picks $\poly(\ell)$ random vertices
(to hit the starting vertex of these pieces)
and runs $\poly(\ell)$ random walks of length $\ell$ will, whp, cover a subgraph that contains an $H$-minor.
We refer to this as the ``local search procedure", which runs in $\poly(\ell)$ time.

This sums up the KSS approach.
Observe that in the first case above, 
by the probabilistic method, we are guaranteed the existence of a minor.
Let us abstract out the argument as follows. Let $\bQ$ be the statement/condition:
there exists a subset $S \subseteq V$, $|S| \geq n/\ell$ such that for at least
half the vertices $s \in S$,  $\|\qrwvec{s}{\ell}{S}\| \leq \ell^{-c}$.
KSS basically proves the following lemmas, which we refer to subsequently as Lemma 1 and Lemma 2.
\begin{asparaenum}
    \item $\bQ \Rightarrow$ $G$ contains an $H$-minor.
    \item $\neg \bQ \Rightarrow$ If $G$ is $\eps$-far from being $H$-minor-free, the local search procedure finds
    an $H$-minor whp.
\end{asparaenum}

\medskip

We now have an approach to get a $\poly(\eps^{-1})$ tester. Suppose we could set the random walk
length $\ell$ to be $\poly(\eps^{-1})$. And suppose we could test the condition $\bQ$ in time $\poly(\eps^{-1})$.
We could then run local search on top of this, and get a bonafide tester.

A simple adaptation of proofs of both Lemma 1 and Lemma 2 run into some 
fundamental difficulties. The proof of Lemma 1 crucially requires $\ell$ to 
be $n^{\delta}$ (or at least $\Omega(\log n)$).
The existence of the minor is shown
through the success of the $\sqrt{n}\cdot \poly(\ell)$ random walk procedure. Constant length random walks cannot find an $H$-minor, even
if $G$ was $\Omega(1)$-far from being $H$-minor-free ($G$ could be a $3$-regular expander). 

\paragraph{From hyperfiniteness to $\ell = \poly(\eps^{-1})$.} We employ a different (and simpler)
approach to reduce the walk length. A classic result of Alon-Seymour-Thomas asserts
that any $H$-minor-free bounded-degree graph $G$ satisfies the following ``hyperfinite" decomposition:
for any $\alpha \in (0,1)$, we can remove an $\alpha$-fraction of the edges to get
connected components of size $O(\alpha^{-2})$. Let us set $\alpha = \poly(\eps)$ and the walk
length $\ell \ll 1/\alpha$. We can show that $\ell$-length random walks in $G$
encounter the removed edges with very low probability. By and large, the walks
behave as if they were performed on the decomposition. Thus, walks in $G$
are ``trapped" in the small components of size $O(\alpha^{-2})$. Quantitatively,
we can show that most vertices $s$, $|\wvec{s}{\ell}\|_2 \geq \poly(\eps)$.
(We use $\wvec{s}{\ell}$ to denote the random walk distribution with starting vertex $s$.)
By the contrapositive: if there are at least $\poly(\eps)$-fraction
of vertices $s$ such that $|\wvec{s}{\ell}\|_2 \leq \poly(\eps)$, then $G$
contains an $H$-minor. This is easily testable. We get a more convenient, $\poly(\eps^{-1})$-query
testable version of Lemma 1.

\paragraph{Clipped norms for local partitioning.} We can now express
our new condition $\neg \bQ$ as: for more than a $(1-\poly(\eps))$-fraction
of vertices $s$, $\|\wvec{s}{\ell}\|_2 \geq \poly(\eps)$. This is a \emph{weakening}
of the antecedent. Previously, the condition referred to returning walks,
which have smaller norm. Furthermore, the returning walks specifically reference $S$,
the set in which we are performing local partitioning. Thus, we have some conditions
on the behavior of random walks within $S$ itself, which is necessary
to perform the local partitioning. Our new condition only refers to the $l_2$-norms
of random walks in $G$.

The new condition appears to be too fragile to get local partitioning within $S$.
It is possible that the $l_2$-norm of $\wvec{s}{\ell}$ is dominated
by a few vertices outside of $S$, whose $l_1$-norm is tiny. In other words,
an event of small probability dominates the $l_2$-norm. The existing proof of
Lemma 2 (from KSS) is not sensitive enough to handle such situations.

We overcome this problem by using a more robust version of norm,
called the \emph{clipped norm}. We define $\clip{\bx}{\xi}$
for distribution vector $\bx$ and $\xi \in (0,1)$ to be the smallest
    $l_2$-norm obtained by removing $\xi$ probability mass ($l_1$-norm) from $\bx$.
In other words, we can measuring the $l_2$-norm after ``clipping" away $\xi$
probability worth of outliers. We can prove a version of Lemma 2 with a lower
bound of the clipped norm. We need to now rework Lemma 1 in terms of clipped norms.
This turns out to be relatively straightforward. 

\paragraph{Putting it all together.} Our final tester is as follows. The length $\ell$ is set to $\poly(\eps^{-1})$.
It picks some random vertices, and estimates the $l_2$-norm 
of \emph{clipped} probability vectors of $\ell$-length random walks from these vertices. 
If sufficiently many of them have ``small" ($\poly(\eps^{-1})$) norms, then the tester 
rejects. Otherwise, it runs $\poly(\eps^{-1})$ walks to find a superset of a low conductance cut. 
The tester employs some exact $H$-minor finding algorithm on the observed subgraph.

\section{The algorithm} \label{sec:alg}

In the algorithm and analysis, we will use the following notation.
\begin{itemize}
	\item {\bf Random walks} - Unless stated otherwise, we consider lazy random walks on graphs. If the walk is at a vertex, $v$, it transitions to each neighbor of $v$ with probability $1/2d$ and remains at $v$ with probability $1 - \frac{d_v}{2d}$ where $d_v$ is the degree of the vertex v. Note that the stationary distribution is uniform.
    We use $M$ to denote the transition matrix of this random walk.

	\item $\wvec{v}{t}$ - the $n$-dimensional probability vector, where the $u$th entry is the probability that a length $t$ random walk started from $v$ ends at ${u}$.
    We denote each entry as $\wvec{v}{t}(u)$.

	\item $\| \cdot \|_p$ - the usual $l_p$ norm on vectors.

\end{itemize}

The two parameters to the algorithm are $\eps \in [0, 1/2]$, and a graph $H$ on $r \geq 3$ vertices. 
We set the walk length $\ell = \alpha r^3 + \lceil\eps^{-20}\rceil$, where $\alpha$ is some absolute constant. 

Our algorithm runs as a subroutine the exact quadratic time minor-finding algorithm of Kawarabayashi-Kobayashi-Reed~\cite{KKR:12}. We denote this procedure by \RS.

\medskip
\noindent
\fbox{
	\begin{minipage}{0.9\textwidth}
		{\isminorfree$(G, \eps, H)$}
		
		\smallskip
		\begin{compactenum}
		\item  Pick multiset $S$ of $\len^{21}$ uniform random vertices. \label{step:S}
			\item For every $s \in S$, run \estclip$(s)$ and \localsearch$(S)$.
            \item If any call to \localsearch{} returns FOUND, \reject. 
            \item If more than $2\len^{20}$ calls to \estclip{} return LOW, \reject.
			\item \accept	
		\end{compactenum}
\end{minipage}}

\noindent
\fbox{
	\begin{minipage}{0.9\textwidth}
	{$\localsearch(s)$}
	
	\smallskip
	\begin{compactenum}
		\item Perform $\ell^{21}$ independent random walks 
				of length $\ell^{11}$ from $s$. Add all the 
				vertices encountered to set $B_s$. \label{step:walk}
		\item Determine $G[B_s]$, the subgraph induced by $B_s$.
		\item If $\RS{(G[B_s],H)}$ finds an $H$-minor, return FOUND.
	\end{compactenum}
\end{minipage}}

\noindent
\fbox{
	\begin{minipage}{0.9\textwidth}
	{\estclip$(s)$}
	
	\smallskip
	\begin{compactenum}
		\item Perform $w = \len^{14}$ walks of length $\len$ from $s$.
		\item For every vertex $v$, let $w_v = \text{ number of walks that end at } v$.
        \item Let $T = \{v \ | \ w_v \geq \len^7/2\}$.
        \item If $\sum_{v \in T} w_v \geq w/3$, output HIGH, else output LOW.
	\end{compactenum}
\end{minipage}}

\medskip

\Thm{main} follows directly from the following theorems.

\begin{theorem} \label{thm:completeness}
		If $G$ is $H$-minor-free, \isminorfree outputs \accept{} with probability at least $2/3$.
\end{theorem}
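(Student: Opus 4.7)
The tester rejects only in two ways: some call to \localsearch{} outputs FOUND, or more than $2\ell^{20}$ calls to \estclip{} output LOW. The FOUND event is immediately ruled out because $G[B_s] \subseteq G$ is $H$-minor-free, so the exact minor-detection routine $\RS$ never succeeds. I focus on the LOW count.

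Since $S$ is a multiset of $\ell^{21}$ independent uniform vertices and the walks in different calls use fresh randomness, the total LOW count is a sum of $\ell^{21}$ i.i.d.\ Bernoullis. By Markov's inequality it suffices to prove $\Pr_{s,\text{walks}}[\estclip(s)=\text{LOW}] \le 1/(3\ell)$ for a uniform $s$. The plan is to combine the hyperfiniteness of $H$-minor-free graphs with standard concentration.

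By the Alon--Seymour--Thomas theorem, for every $\beta \in (0,1)$ there is an edge set $F \subseteq E(G)$ with $|F| \le \beta dn$ whose removal leaves components of size $O(r^3/\beta^2)$. I would take $\beta = \Theta(1/\ell^2)$; after absorbing the $r^3$ factor into the $\alpha r^3$ summand of $\ell$ this gives every component $C$ of $G \setminus F$ with $|C| \le \ell^6/100$. Because the uniform distribution is stationary for the lazy walk $M$, the probability that a single step of a walk started at a uniform $s$ crosses an edge of $F$ is exactly $\beta$, so the expected number of $F$-crossings in $\ell$ steps (averaged over $s$) is $\beta\ell = \Theta(1/\ell)$. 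Calling $s$ \emph{good} when $\Pr[\text{walk from } s \text{ touches } F] \le 1/100$, Markov's inequality gives at most a $1/(6\ell)$-fraction of bad $s$ (after tuning constants).

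For good $s$ let $C(s)$ be the component of $G \setminus F$ containing $s$; the walk lies entirely in $C(s)$ with probability $\ge 0.99$, so $\wvec{s}{\ell}$ places mass $\ge 0.99$ on $C(s)$. Because $|C(s)| \le \ell^6/100$, the set $U = \{v : \wvec{s}{\ell}(v) \ge \ell^{-7}\}$ captures mass at least $0.99 - |C(s)|\,\ell^{-7} \ge 0.98$, and $|U| \le \ell^7$ since $\wvec{s}{\ell}$ is a probability vector. Now \estclip{} draws $w = \ell^{14}$ walks, so $\E[w_v] \ge \ell^7$ for every $v \in U$; a Chernoff bound plus a union bound over $|U| \le \ell^7$ vertices shows that with probability $1 - e^{-\Omega(\ell^7)}$ every $v \in U$ satisfies $w_v \ge \ell^7/2$ and thus lies in $T$. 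Then $\sum_{v \in T} w_v \ge \tfrac12 \sum_{v\in U} w\,\wvec{s}{\ell}(v) \ge 0.49\,w > w/3$, so \estclip{} outputs HIGH. Combining, $\Pr[\estclip(s)=\text{LOW}] \le 1/(6\ell) + e^{-\Omega(\ell^7)} \le 1/(3\ell)$, as needed. The main obstacle is parameter balancing: $\beta$ must be small enough that components sit well below $\ell^7$ (so the $\ell^{-7}$ clipping threshold captures the dominant mass), yet large enough that $F$ is crossed by only an $O(1/\ell)$-fraction of walks; the $\alpha r^3 + \lceil \eps^{-20}\rceil$ shape of $\ell$ is exactly what makes both constraints simultaneously satisfiable.
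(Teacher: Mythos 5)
Your proof is correct and reaches the conclusion via the same combinatorial engine as the paper (Alon--Seymour--Thomas hyperfiniteness plus a Chernoff/union-bound analysis of the empirical counts in \estclip{}), but it is organized differently. The paper factors the completeness argument through the \emph{clipped norm}: \Lem{clip:new} shows that $H$-minor-freeness forces $\|\clip{\wvec{s}{\ell}}{3/8}\|_2^2 \ge \ell^{-7}$ for a $(1-1/\ell)$-fraction of starting vertices, and the second half of \Clm{estclip} shows that such a clipped-norm lower bound makes \estclip{} output HIGH. You inline both steps, arguing directly that for a $(1-O(1/\ell))$-fraction of vertices the heavy set $U = \{v : \wvec{s}{\ell}(v) \ge \ell^{-7}\}$ carries mass at least $0.98$ --- which is precisely what \estclip{}'s threshold rule detects --- and then closing with a Markov bound on the LOW count rather than the paper's Chernoff bound on the number of light vertices sampled. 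Your route is more elementary and self-contained, and it even yields a stronger intermediate constant ($0.98$ mass on $U$ versus the paper's $3/8$ on $H'$ in \Clm{estclip}). What you give up by bypassing the clipped norm is only structural economy: the clipped norm is the quantity that \Lem{partitioning} and \Thm{partition} take as their premise in the soundness direction, so the paper's modularization lets \Lem{clip:new} and \Clm{estclip} serve both halves of the argument. Your observation that $G[B_s]\subseteq G$ immediately rules out a FOUND output matches the paper's, and your parameter accounting ($\beta = \Theta(1/\ell^2)$ with $r^3$ absorbed into the $\alpha r^3$ summand of $\ell$) is consistent with how the paper chooses $k = r^3\ell^6$ in \Lem{ast}.
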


\begin{theorem} \label{thm:soundness}
		If $G$ is $\eps$-far from $H$-minor-freeness, then \isminorfree outputs \reject{} with probability
		at least $2/3$.
\end{theorem}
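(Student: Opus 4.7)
The plan is to perform a case analysis on a condition $\bQ$ about clipped $l_2$-norms of $\ell$-step walk distributions. Informally, $\bQ$ is the statement that at least a $\poly(\eps)$ fraction of vertices $s \in V$ satisfy $\clip{\wvec{s}{\ell}}{\xi}_2 \leq \poly(\eps)$, where $\xi$ is a small probability-mass threshold and the norm bound matches what \estclip effectively detects. Following the outline in \Sec{ideas}, the version of Lemma~1 proved in the paper guarantees that if $\bQ$ holds then $G$ contains an $H$-minor (via the Alon-Seymour-Thomas hyperfinite decomposition argument), so when $G$ is $\eps$-far from $H$-minor-freeness we are always in one of two regimes: either $\bQ$ holds, in which case \estclip should drive rejection at step~4; or $\neg \bQ$ holds, in which case the local partitioning machinery should drive rejection via \localsearch.

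In the first case, the core step is to show that \estclip$(s)$ outputs LOW with high probability whenever $s$ has small clipped norm, and outputs HIGH whenever $s$ has large clipped norm. The subroutine draws $w = \ell^{14}$ length-$\ell$ walks and collects the empirically heavy endpoints $T = \{v : w_v \geq \ell^7/2\}$. A Chernoff bound shows $w_v/w$ approximates $\wvec{s}{\ell}(v)$ to additive accuracy $\ll \ell^{-7}$ uniformly over $v$, so $T$ is essentially the set of vertices with probability mass $\geq \ell^{-7}/2$; the remaining probability mass then forms a valid clipping, and its empirical $l_2$-norm can be bounded above and below by $\sum_{v \notin T} w_v/w$ together with the $\ell^{-7}$ threshold. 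Translating the LOW/HIGH decision into a statement about $\clip{\wvec{s}{\ell}}{\xi}_2$ is a routine but careful calculation. Given this, a standard Chernoff argument on the uniform sample $S$ of size $\ell^{21}$ shows that, when $\bQ$ holds, the number of LOW outputs exceeds $2\ell^{20}$ except with probability $\exp(-\Omega(\ell))$, giving rejection at step~4.

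In the second case, $\neg \bQ$ means that all but a $\poly(\eps)$ fraction of vertices have $\clip{\wvec{s}{\ell}}{\xi}_2 \geq \poly(\eps)$. Here we invoke the clipped-norm analogue of KSS's local partitioning (developed in \Sec{discoverable}): each such $s$ has a low-conductance cut $C_s$ of size $\poly(\ell)$, and moreover $\ell^{21}$ independent walks of length $\ell^{11}$ from $s$ cover a superset of $C_s$ with high probability. Iterating as in a Spielman-Teng-style partition argument yields a decomposition of $G$ into pieces of size $\poly(\ell)$ with at most $(\eps/2) dn$ inter-piece edges and a designated anchor vertex per piece that discovers it. Since $G$ remains $(\eps/2)$-far from $H$-minor-freeness after deleting these inter-piece edges and each piece has only $O(d \cdot \poly(\ell))$ edges, at least an $\Omega(\eps/\poly(\ell))$ fraction of pieces contain an $H$-minor. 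With $|S| = \ell^{21} \gg \poly(\ell)/\eps$, the sample $S$ hits the anchor of such a piece with probability $\geq 5/6$; for that $s$, $B_s \supseteq C_s$ contains an $H$-minor, so \RS$(G[B_s],H)$ returns FOUND and the tester rejects.

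The main obstacle will be the $\neg \bQ$ case: the premise is a statement purely about the $l_2$ behavior of walks in the ambient graph $G$, while the local partitioning conclusion is about cuts whose structure lives inside an a priori unknown set $S'$. The existing KSS analysis assumes the stronger returning-walk condition; porting it to the clipped-norm setting requires propagating the clipped-norm lower bound through the random-walk truncation and stratification steps, keeping the polynomial dependence on $\eps$ under control. The companion lemma that $\ell^{11}$-length walks from the anchor cover $C_s$ (rather than merely overlapping it) also needs to be stated in terms that interact cleanly with the clipping. Once these two adaptations are in place, the overall proof is just the two-case combination above together with union bounds over $|S| = \ell^{21}$ calls.
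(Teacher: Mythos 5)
Your two-case structure — many vertices with small clipped $\ell_2$-norm force rejection via \estclip{} in step~4, otherwise invoke the clipped-norm partitioning to drive rejection via \localsearch{} — is exactly the structure of the paper's proof of \Thm{soundness}, and your account of the second case (the partitioning gives discoverable pieces with $\eps dn/2$ crossing edges, the $Q_i$'s containing minors must cover $\geq \eps n/2$ vertices, anchors are plentiful enough to be hit by the sample $S$, and then \localsearch{} covers $Q_i$ whp) is a faithful summary of Case~2.

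There is, however, one step that would fail as written. You argue that a Chernoff bound shows $w_v/w$ approximates $\wvec{s}{\ell}(v)$ to additive accuracy $\ll \ell^{-7}$ \emph{uniformly over all $v$}. With $w = \ell^{14}$ samples, the per-vertex additive deviation at scale $\ell^{-7}$ has failure probability roughly $\exp(-\ell^{-14}\cdot \ell^{14}) = \Theta(1)$, so a union bound over $n$ vertices yields nothing. The paper's \Clm{set} replaces this with a more careful argument: for vertices with $\wvec{s}{\ell}(v)\geq \ell^{-7}$ (of which there are $\leq \ell^7$), a multiplicative Chernoff bound plus union bound over these few heavy vertices shows they all land in $T$; for vertices with $\wvec{s}{\ell}(v)\leq \ell^{-8}$, it case-splits on whether $\wvec{s}{\ell}(v)\geq \exp(-\ell/2)$ to control the union bound. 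The conclusion you want — that $T$ is sandwiched between the $\ell^{-7}$-heavy and the $\ell^{-8}$-heavy vertex sets — is still true, but you need the per-scale analysis, not a uniform additive bound. One smaller point: invoking Lemma~1 (\Lem{clip:new}) to conclude that $\bQ$ implies an $H$-minor does no work in the soundness direction, since $G$ being $\eps$-far already implies it contains an $H$-minor; Lemma~1 is only needed for completeness.
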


\begin{claim}\label{clm:runtime}
	There exists an absolute constant, $c$ such that the query complexity of \isminorfree{}
    is $O(d(r/\eps)^c)$ and time complexity is $O(d^2(r/\eps)^2)$.
\end{claim}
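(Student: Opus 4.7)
The plan is to directly tally the queries and running time of each subroutine, plug in the chosen walk length $\ell=\alpha r^{3}+\lceil \eps^{-20}\rceil = O((r/\eps)^{20})$, and verify that everything collapses into the required polynomial in $r/\eps$ with the stated dependence on $d$. Since $\ell$ is itself polynomial in $r/\eps$, all that matters is to read the exponents off the code and add them up.

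First I would argue that a single step of a lazy random walk can be implemented with $O(1)$ neighbor queries (and $O(1)$ time): draw $i\in[d]$ uniformly, query the $i$th neighbor of the current vertex $v$; if the query returns a vertex $u$, move to $u$ with probability $1/2$ and stay with probability $1/2$, and if it returns $\bot$, stay. A direct computation shows that the resulting transition probabilities match $M_{uv}=1/(2d)$ for each neighbor $u$ and $M_{vv}=1-d_v/(2d)$, so this is indeed the lazy walk used throughout the paper.

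With that in hand, I would bound each subroutine. The call $\estclip(s)$ runs $\ell^{14}$ walks of length $\ell$, costing $O(\ell^{15})$ queries and $O(\ell^{15})$ time. The call $\localsearch(s)$ runs $\ell^{21}$ walks of length $\ell^{11}$, so generating $B_s$ uses $O(\ell^{32})$ queries, and $|B_s|\le \ell^{32}$. Determining the induced subgraph $G[B_s]$ requires asking for all $d$ neighbor slots of each vertex in $B_s$, for $O(d\ell^{32})$ queries and time. Running \RS{} on the graph $G[B_s]$ on at most $\ell^{32}$ vertices uses $O(\ell^{64})$ time by the quadratic bound of~\cite{KKR:12} (the hidden constant depends only on $H$). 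Finally, $\isminorfree$ iterates both calls over the $\ell^{21}$ vertices of $S$, giving a total query complexity of
\[
O\bigl(\ell^{21}\cdot(\ell^{15}+\ell^{32}+d\ell^{32})\bigr)=O(d\,\ell^{53})
\]
and a total running time of
\[
O\bigl(\ell^{21}\cdot(\ell^{15}+d\ell^{32}+\ell^{64})\bigr)=O\bigl(d\ell^{53}+\ell^{85}\bigr).
\]

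Plugging in $\ell=O((r/\eps)^{20})$ turns the first bound into $O(d(r/\eps)^{c})$ and the second into $O((r/\eps)^{2c})\le O(d^{2}(r/\eps)^{2c})$ for an absolute constant $c$ (the unoptimized counts above already work with $c=1060$, matching the ``likely more than $100$'' remark after Theorem~\ref{thm:main}). There is no real obstacle here: the only thing to be careful about is that generating $G[B_s]$ does cost an extra factor of $d$ beyond the walk cost, which is what produces the single $d$ factor in the query bound; the rest is arithmetic on exponents and follows from the algorithm description verbatim.
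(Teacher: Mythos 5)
Your proposal is correct and follows essentially the same route as the paper's (much terser) proof: count queries as $\mathrm{poly}(\ell)$ with $\ell=\mathrm{poly}(r/\eps)$, attribute the single factor of $d$ to constructing $G[B_s]$, and attribute the quadratic time overhead to running the Kawarabayashi--Kobayashi--Reed subroutine. You merely fill in the explicit exponent arithmetic that the paper leaves implicit, so there is nothing substantively different to compare.
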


\begin{proof} The entire algorithm is based on performing $\poly(\ell)$ random walks
of length $\poly(\ell)$. Note that $\ell = \poly(r/\eps)$. The dependence
on $d$ appears because the subgraph $G[B_s]$ is constructed by query
the neighborhood of all vertices in $B_s$. The quadratic overhead in running time
is because of KKR.
\end{proof}

\section{Random walks do not spread in minor-free graphs}\label{sec:reduce}

We first define the clipped norm.

\begin{definition} \label{def:clip} Given $\bx \in (\RR^+)^{|V|}$ and parameter $\xi \in [0,1)$, the $\xi$-clipped
vector $\clip{\bx}{\xi}$ is the lexicographically least vector $\by$ optimizing the program:
$\min \|\by\|_2$, subject to $\|\bx - \by\|_1 \leq \xi$ and $\forall v \in V, \by(v) \leq \bx(v)$.
\end{definition}

The clipping operation removes ``outliers" from a vector, with the intention of minimizing the $l_2$-norm.
For a probability distribution $\wvec{s}{\ell}$, a small value of $\|\wvec{s}{\ell}\|^2_2$ is a measure
of the spread of the walk. But this is a crude lens. There may be one large coordinate in $\wvec{s}{\ell}$
that determines the norm, while all other coordinates are (say) uniform. The clipped
norm better captures (for our purposes) the notion of a random walk spreading.

We state the main result of this section. The constant $3/8$ below
is just for convenience, and can be replaced by any non-zero constant
(with a constant drop in the lower bound).

\begin{lemma}\label{lem:clip:new}
There is an absolute constant $\alpha$ such that the following holds.
Let $H$ be a graph on $r$ vertices. Suppose $G$ is a $H$-minor-free
graph. Then for any $\len \geq \alpha r^3$, there exists 
at least $(1 - 1/\len )n$ vertices such that
$\| \clip{\wvec{v}{\len}}{3/8}\|_2^2 \geq \len^{-7}$.
	\end{lemma}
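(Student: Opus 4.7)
The plan is to exploit the hyperfiniteness of $H$-minor-free graphs via the Alon--Seymour--Thomas separator theorem. By iterating that theorem, for any $\beta \in (0,1)$ there exists an edge set $E_{\mathrm{bad}} \subseteq E$ with $|E_{\mathrm{bad}}| \leq \beta d n$ whose removal breaks $G$ into connected components each of size $O(r^3/\beta^2)$. I will set $\beta = \Theta(\ell^{-2})$, which combined with the hypothesis $\ell \geq \alpha r^3$ forces each component to have size $|C| = O(\ell^5)$ (the $r^3$ factor from AST is absorbed into $\ell$).

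Next I show that for all but a $1/\ell$ fraction of starting vertices $v$, the length-$\ell$ lazy random walk from $v$ stays inside the component containing $v$ with probability at least $9/10$. Let $f(v)$ denote the probability that such a walk traverses some edge of $E_{\mathrm{bad}}$. Since the uniform distribution is stationary for the lazy walk on $G$, a union bound over the $\ell$ steps gives
\[
\E_v[f(v)] \;\leq\; \ell\cdot \frac{|E_{\mathrm{bad}}|}{dn} \;=\; \ell\beta \;=\; O(\ell^{-1}),
\]
and Markov's inequality then bounds the density of ``bad'' starting vertices with $f(v) > 1/10$ by $1/\ell$.

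Fix any good vertex $v$, let $C$ be its component, and write $p = \wvec{v}{\ell}$ and $q = \clip{p}{3/8}$. Then $p(C) \geq 9/10$ so $p(\bar C) \leq 1/10$. Since $q \leq p$ pointwise and $\|q\|_1 \geq 1 - 3/8 = 5/8$ by the clipping definition, I obtain $q(C) \geq \|q\|_1 - p(\bar C) \geq 5/8 - 1/10 > 1/2$. Cauchy--Schwarz applied to $q$ restricted to $C$ now yields
\[
\|q\|_2^2 \;\geq\; \sum_{u \in C} q(u)^2 \;\geq\; \frac{q(C)^2}{|C|} \;\geq\; \frac{1}{4|C|} \;=\; \Omega(\ell^{-5}),
\]
which exceeds $\ell^{-7}$ once $\ell$ is larger than an absolute constant (absorbed into $\alpha$).

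The main delicate point is the dual role of $\beta$: it must be small enough that length-$\ell$ walks almost surely avoid $E_{\mathrm{bad}}$, yet large enough to keep component sizes well below $\ell^{7}$ after paying the $r^{3}$ price from AST. The assumption $\ell \geq \alpha r^3$ is exactly what lets both demands be met simultaneously. A second subtlety is that clipping can in principle discard mass from inside $C$; the argument above handles this by observing that the $3/8$ mass clipped away can subtract at most $3/8$ from $q(C)$, and this loss is dominated by the baseline $p(C) \geq 9/10$, leaving more than half the mass of $q$ inside a set of size $O(\ell^5)$ --- which is all Cauchy--Schwarz needs.
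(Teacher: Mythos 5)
Your proof is correct and reaches the same conclusion, but it takes a genuinely simpler final step than the paper. Both arguments start from the Alon--Seymour--Thomas decomposition (the paper uses $k = r^3\ell^6$, you use $\beta = \Theta(\ell^{-2})$, which are equivalent up to the precise exponent after absorbing $r^3 \le \ell/\alpha$) and both use the stationary-distribution expectation $\EX_v[X_v] \le \ell |R|/(dn)$ followed by Markov to find a set of $(1-1/\ell)n$ good starting vertices whose walks stay in their component with probability at least a constant. Where the two diverge is in converting ``walk is trapped in a small component'' to ``clipped $\ell_2$-norm is large.'' The paper introduces the restricted Markov chain on $C$ (rewiring boundary edges into self-loops), bounds its clipped norm via Jensen's inequality, and then invokes \Clm{good:vertices:new} — a separate, fairly involved claim — to transfer this bound from the restricted walk distribution $\mathbf{p}'_{v,\ell}$ back to $\wvec{v}{\ell}$. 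You instead reason directly about $q = \clip{\wvec{v}{\ell}}{3/8}$: from $\|q\|_1 \ge 5/8$ (forced by the clip budget), $q \le p$ pointwise, and $p(\bar C) \le 1/10$, you deduce $q(C) \ge 5/8 - 1/10 > 1/2$, and a single application of Cauchy--Schwarz over $C$ finishes. This bypasses the restricted walk and \Clm{good:vertices:new} entirely, which is a real economy; the paper's claim is presumably stated at that generality because it's a cleanly reusable inequality about clipping, but for this lemma your direct mass-on-$C$ argument is shorter and just as rigorous.
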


In order to show this lemma, we will use the classic decomposition theorem for 
minor-free graphs by Alon-Seymour-Thomas~\cite{AST90}. It originally appears phrased 
in terms of a weight function $w: V \rightarrow \mathbb{R}^+$. We use 
the uniform weight function $\forall v\in V$ $w(v) = 1/n$ to obtain 
the restatement below.

\begin{lemma}[Proposition 4.1 of \cite{AST90}]\label{lem:ast}
There is an absolute constant $\alpha$ such that the following holds.
	Let $H$ be a graph on $r$ vertices. Suppose $G$ is an $H$-minor-free 
	graph with maximum degree $d$. Then, for all $k \in \NN$,
    there exists a set of at most  
	$\alpha n r^{3/2}/k^{1/2}$ vertices whose 
	removal leaves $G$ will all connected components of size at most $k$.
\end{lemma}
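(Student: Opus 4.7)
The plan is to derive this decomposition from the balanced vertex-separator theorem for $H$-minor-free graphs, the principal result of \cite{AST90}: every $H$-minor-free graph on $m$ vertices contains a set $S$ of at most $\alpha' r^{3/2}\sqrt{m}$ vertices whose removal leaves every connected component with at most $(2/3)m$ vertices. Since $H$-minor-freeness is hereditary, this theorem applies unchanged to every induced subgraph encountered in a recursive decomposition. Starting from $G$, I would apply the separator theorem, delete the separator, and recurse on each connected component of size strictly greater than $k$; the recursion halts once every surviving component has size at most $k$. Let $T$ be the union of all separator sets extracted in this process; then $G - T$ witnesses the lemma, and the task reduces to bounding $|T|$.

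The bound comes from amortization. Each invocation on a piece $P$ with $|P| > k$ adds $\alpha' r^{3/2}\sqrt{|P|} \leq \alpha' r^{3/2}|P|/\sqrt{k}$ vertices to $T$, since $\sqrt{|P|} \leq |P|/\sqrt{k}$ in this regime. At any fixed level of the recursion, the active pieces are vertex-disjoint with total size at most $n$, so the contribution from that level is at most $\alpha' r^{3/2} n/\sqrt{k}$. A straightforward depth bound of $O(\log(n/k))$ then yields $|T| = O(r^{3/2} n \log(n/k)/\sqrt{k})$, which is off by a logarithmic factor from the stated bound.

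The main obstacle is removing that logarithm. To do so I would strengthen the induction hypothesis: letting $f(m)$ denote the maximum size of such a separator set in any $H$-minor-free graph on $m$ vertices, I would prove $f(m) \leq C r^{3/2}(m/\sqrt{k} - \sqrt{m})$ for $m > k$ and $f(m) = 0$ for $m \leq k$. The $-\sqrt{m}$ slack is paid for using strict concavity of $\sqrt{\cdot}$: when a piece of size $m$ splits into subpieces of sizes $m_1, \ldots, m_j$ each at most $(2/3)m$ (which forces $j \geq 2$ once $m$ exceeds $O(r^3)$), we obtain $\sum_i \sqrt{m_i} \geq \sqrt{m} + \Omega(\sqrt{m})$, and the surplus absorbs the current separator charge $\alpha' r^{3/2}\sqrt{m}$. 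The one degenerate case is when $|P| = O(r^3)$ and the separator barely reduces the piece; iterating the separator theorem a constant number of times handles this regime at cost absorbed by the same slack. Since the final statement is verbatim Proposition 4.1 of \cite{AST90}, in the actual write-up I would simply cite it as a black box rather than reproduce the argument in full.
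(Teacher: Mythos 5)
The paper does not actually prove this lemma: it cites Proposition~4.1 of \cite{AST90} verbatim, specialized to the uniform weight function $w(v)=1/n$, and moves on. Your alternative route --- deriving the decomposition from the balanced vertex-separator theorem by recursion --- is a standard and legitimate approach (it is the argument Lipton--Tarjan used for planar graphs), so the overall direction is sound even though it does more work than the paper.

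However, the specific induction you propose has a gap at the boundary. You claim $f(m)\leq C r^{3/2}\bigl(m/\sqrt{k}-\sqrt{m}\bigr)$ for $m>k$. The right-hand side vanishes as $m\to k^{+}$, whereas any piece with $k<m\leq(3/2)k$ requires exactly one separator call of size $\Theta(r^{3/2}\sqrt{k})$, so $f(m)$ is bounded away from zero on that range. Thus the induction cannot even close at the base: when one separator call finishes a piece (all children $\leq k$), the inherited $-\sqrt{m_i}$ surplus is absent, since for $m_i\leq k$ you only have $f(m_i)=0$, not $f(m_i)\leq Cr^{3/2}(m_i/\sqrt{k}-\sqrt{m_i})$ (the latter would be negative). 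Your degenerate-case remark addresses $|P|=O(r^3)$, but the genuine obstruction is $|P|$ close to $k$, and iterating the separator a constant number of times does not shrink the $\Theta(r^{3/2}\sqrt{k})$ cost. The clean repair is the per-vertex amortization: along the chain of ancestor pieces containing a vertex $v$, the sizes shrink by a factor $2/3$ each step and stay above $k$ until the last internal node, so $v$'s total charge is bounded by $cr^{3/2}\sum_{j\geq 0}\bigl((3/2)^{j}k\bigr)^{-1/2}=O(r^{3/2}/\sqrt{k})$, a convergent geometric series that yields the bound without a logarithm. I also note that your concavity claim $\sum_i\sqrt{m_i}\geq\sqrt{m}+\Omega(\sqrt{m})$ relies implicitly on $\sum_i m_i$ being close to $m$ (i.e.\ the separator is a small fraction of the piece); that is true once $m\gg r^3$, but should be said, since without it the inequality is false (e.g.\ two singleton components).
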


It is convenient to think of the Markov chain on $G$
in terms of a multigraph on $G$, with $2d$ edges from each vertex. Each
edge has probability exactly $1/2d$, and self-loops consist of many such edges.
Note that every edge of the original graph is a single edge in this multigraph.
For any subset of vertices $C \subseteq V$, let us define the random
walk restricted to $C$. We remove every cut edge $(u,v)$ (where $u \in C$
and $v \notin C$) and add a self-loop of the same probability at $u$. This produces a Markov chain on $C$
that is symmetric. Given a subset $C$ and $v \in C$, we use $\prwprvec{v}{t}$ to denote
the distribution of endpoints of $t$-length random walk starting from 
$v$ and restricted to $C$. (In our use, $C$ will apparent from context, so we will not
carry the dependence on $C$ in the notation.)

The following claim relates the clipped norms of the $\wvec{v}{t}$
and $\prwprvec{v}{t}$ vectors.

\begin{claim} \label{clm:good:vertices:new} Let $C \subset V$ and $v \in C$.
Let $\eta$ be the probability that a $t$-length random walk from $v$ (in $G$)
leaves $C$. For any $\sigma > \eta$, $\|\clip{\wvec{v}{t}}{\sigma - \eta}\|^2_2
\geq \|\clip{\prwprvec{v}{t}}{\sigma}\|^2_2$.
\end{claim}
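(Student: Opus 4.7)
The plan is to couple the $G$-walk and the $C$-restricted walk started from $v$ so that they share randomness, and then construct a feasible candidate for the clipping program defining $\clip{\prwprvec{v}{t}}{\sigma}$ whose $\ell_2$-norm is bounded by that of the optimum clipped vector $\clip{\wvec{v}{t}}{\sigma-\eta}$. The claim will follow, since $\clip{\prwprvec{v}{t}}{\sigma}$ is by definition an $\ell_2$-minimizer over feasible candidates.

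I would set up the coupling at the level of the multigraph representation: at each step both walks pick the same incident edge, and whenever the $G$-walk traverses a cut edge leaving $C$, the coupled restricted walk takes the corresponding self-loop instead (after which the two walks may evolve freely). Let $\tau$ be the first step at which the $G$-walk leaves $C$; by definition of $\eta$ we have $\Pr[\tau \leq t] = \eta$, and conditioned on $\{\tau > t\}$ the two walks are identical. Setting $\epsilon_u := \Pr[\text{restricted walk ends at } u,\, \tau \leq t]$ for $u \in C$, this gives the key pointwise bound
\[
\prwprvec{v}{t}(u) \;\leq\; \wvec{v}{t}(u) + \epsilon_u, \qquad \text{with } \sum_{u \in C} \epsilon_u \;=\; \Pr[\tau \leq t] \;=\; \eta.
\]

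Next I would build the candidate. Set $\by^{\star} := \clip{\wvec{v}{t}}{\sigma-\eta}$, so $\by^{\star} \leq \wvec{v}{t}$ pointwise and $\|\wvec{v}{t}-\by^{\star}\|_1 \leq \sigma-\eta$. Define $\by'(u) := \min\{\prwprvec{v}{t}(u), \by^{\star}(u)\}$ for $u \in C$ and $\by'(u) := 0$ elsewhere. The pointwise bound $\by' \leq \prwprvec{v}{t}$ is immediate; the only nontrivial check is $\ell_1$-feasibility. For each $u \in C$, the coupling inequality gives $\prwprvec{v}{t}(u) - \by^{\star}(u) \leq (\wvec{v}{t}(u) - \by^{\star}(u)) + \epsilon_u$, and summing positive parts yields $\|\prwprvec{v}{t} - \by'\|_1 \leq \|\wvec{v}{t}-\by^{\star}\|_1 + \eta \leq \sigma$. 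Since $\by'(u) \leq \by^{\star}(u)$ coordinatewise, optimality of the clipping then gives $\|\clip{\prwprvec{v}{t}}{\sigma}\|_2^2 \leq \|\by'\|_2^2 \leq \|\by^{\star}\|_2^2$, completing the argument.

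The main obstacle I anticipate is formalizing the coupling and extracting the pointwise bound cleanly: after time $\tau$ the restricted walk can wander arbitrarily within $C$, so the coupling only forces the two distributions to agree on the event $\{\tau > t\}$, not sample-path-wise. Once this is in place, the rest is bookkeeping, with the slack $\eta$ in the clipping budget used exactly to absorb the coupling discrepancy $\sum_u \epsilon_u = \eta$.
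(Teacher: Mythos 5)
Your proof is correct and is essentially the paper's proof in slightly different language: the paper establishes the pointwise bound $\prwprvec{v}{t}(u) \leq \wvec{v}{t}(u) + {\bf r}_{v,t}(u)$ (with $\sum_u {\bf r}_{v,t}(u) \leq \eta$) by decomposing restricted-walk paths according to whether they use a ``red'' self-loop, which is exactly what your coupling encodes (your $\epsilon_u$ equals the paper's ${\bf r}_{v,t}(u)$), and your candidate $\by' = \min(\prwprvec{v}{t}, \by^\star)$ is the paper's $\bw$. The remaining bookkeeping matches as well.
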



\begin{proof}
The random walk restricted to $C$ is obtained by adding some self-loops
that are not in the original Markov chain. Color all these self-loops
red. Let ${\bf r}_{v,t}(u)$ be the probability of a 
$t$-length walk from $v$ to $u$ that contains a red edge. Any path without 
a red edge is a path in $G$ (with the same probability), so 
$\prwprvec{v}{t}(u) \leq \wvec{v}{t}(u) + {\bf r}_{v,t}(u)$.

Note that $\sum_{u \in C} {\bf r}_{v,t}(u)$ is the total probability of
a random walk from $u$ restricted to $C$ encountering a red self-loop.
Red self-loops correspond to cut edges in the original graph,
and thus, this is the probability of encountering a cut edge.
Hence, $\sum_{u \in C} {\bf r}_{v,t}(u) \leq \eta$.

Intuitively, we can obtain a $\sigma$-clipping of $\prwprvec{v}{t}$
by first clipping at most $\eta$ probability mass to get $\wvec{v}{t}$,
and then performing a $(\sigma-\eta)$-clipping of $\wvec{v}{t}$.
We formalize this below.

Let $\bq = \clip{\wvec{v}{t}}{\sigma - \eta}$, and let us define the 
$|C|$-dimensional vector $\bw$ by $\bw (u) = \min \left( \bq(u), \prwprvec{v}{t}(u)\right)$. Since $\bw$ is non-negative and $\bw(u) \leq \bq(u)$ for all $u \in C$, 
it follows that $\|\bw\|_2^2 \leq \|\bq\|_2^2 = \|\clip{\wvec{v}{t}}{\sigma-\eta}\|^2_2$. 
      By construction, for all $u \in C$, $\bw(u) \leq \prwprvec{v}{t}(u)$.
We will prove that $\|\bw - \prwprvec{v}{t}\|_1 \leq \sigma$, implying 
that $\|\clip{\prwprvec{v}{t}}{\sigma}\|^2_2 \leq \|\bw\|^2_2$. This will
complete the argument.

Let $D \subseteq C$ be the set of coordinates such that $\bq(u) < \prwprvec{v}{t}(u)$.
Since $\bw(u) = \min(\bq(u),\prwprvec{v}{t}(u))$,
$\| \prwprvec{v}{t} - \bw \|_1 = \sum_{u \in D} [\prwprvec{v}{t}(u) - \bq(u)]$. Combining
with the previous observations and noting that $\bq = \clip{\wvec{v}{t}}{\sigma - \eta}$,
\begin{align}
   	\|\prwprvec{v}{t} - \bw \|_1 & \leq \sum_{u \in D}[\wvec{v}{t}(u) + {\bf r}_{v,t}(u) - \bq(u)] 
    \leq \|\wvec{v}{t}(u) - \bq\|_1 + \sum_{u \in C} {\bf r}_{v,t}(u) \leq (\sigma-\eta) + \eta = \sigma 
\end{align}
\end{proof}

%
%
%
%

We now prove the main lemma of this section.

\begin{proof}[Proof of \Lem{clip:new}]
	Fix some $\ell \in \NN$, $\ell > \alpha r^3$ and use \Lem{ast} with $k = r^3\ell^6$.
There exists a subset $R$ of at most $\alpha dn/\ell^3$ edges
whose removal breaks up $G$ into connected components of size at most $r^3 \ell^{6}$.
Refer to these as AST components.
Now, consider an $\ell$-length walk in $G$ starting from the stationary 
distribution (which is uniform). The probability that this walk encounters an edge
in $R$ at any step is exactly $|R|/2dn$. Let the random variable $X_v$ be 
the	number of edges of $R$ encountered in an $\ell$-length walk from $v$. Note that
when $X_v = 0$, then the walk remains in the AST component containing $v$.
Thus,
 $$(1/n) \sum_v \Pr[\textrm{walk from $v$ leaves AST component}]
 \leq \EX_{v \sim \text{u.a.r.}}[X_v] = \ell |R|/2dn \leq \alpha/(2 \ell^{2})$$
Since $\ell > \alpha r^3 > 4\alpha$, we can upper bound by $1/8\ell$.
By the Markov bound, for at least $(1-1/\ell)n$ vertices,
the probability that
an $\ell$-length walk starting at $v$ encounters an edge of $R$ and thus leaves 
the AST piece containing $v$ is at most $1/8$. Denote the set of these vertices 
by $S$. 
	
Consider any $s \in S$. Suppose it is contained in the AST component $C$.
Note that $\|\clip{\prwprvec{s}{\ell}}{1/2}\|_1
\geq 1/2$. Furthermore, it has support at most $|C| \leq r^3\ell^6$.
By Jensen's inequality, $\|\clip{\prwprvec{s}{\ell}}{1/2}\|^2_2 \geq 1(4r^3\ell^6)$.
As argued earlier, the probability that a random walk (in $G$) from $s$ leaves $C$
is at most $1/8$. Applying \Clm{good:vertices:new} for $\sigma = 1/2$ and $\eta = 1/8$,
we conclude that $\|\clip{\wvec{s}{\ell}}{1/2-1/8}\|^2_2 \geq 1/(4r^3\ell^6) \geq 1/\ell^7$.
(For convenience, we assume that $\alpha > 4$.)
%
%
\end{proof}

\section{The existence of a discoverable decomposition}\label{sec:discoverable}

If many vertices have large clipped norms, we prove that $G$ can be partitioned
into small low conductance cuts. Furthermore, each cut can be discovered by
$\poly(\ell)$ $\ell$-length random walks. The analysis follows the structure
given in~\cite{KSS:18}.

\begin{restatable}{lemma}{partition} \label{lem:partitioning}
Let $c > 1$ be a parameter. Suppose there exists
$S \subseteq V$ such that $|S| > n/\len^{1/5}$ and $\forall s\in S, \|\clip{\wvec{s}{\len}}{1/4}\|^2_2 > \len^{-c}$.
Then, there exists $\tS \subseteq S$ with $|\tS| \geq |S|/4$ such that
for each $s \in \tS$, there exists a subset $P_s \subseteq S$ where
\begin{itemize}
	\item $\forall v \in P_s$, $\sum_{t < 16\len^{c+1}} \prw{s}{v}{t} \geq 1/8\len^{c+1}$.
	\item $|E(P_s, S \setminus P_s)| \leq 4d|P_s|\sqrt{c\len^{-1/5}\log \len}$.
\end{itemize} 
\end{restatable}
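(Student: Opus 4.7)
The plan is to adapt the local partitioning machinery of~\cite{KSS:18} (Sections 6--7), substituting the clipped norm $\|\clip{\wvec{s}{\len}}{1/4}\|_2^2$ for the returning-walk norm used there. The overall structure is: first extract a large subset $\tS \subseteq S$ of ``partition-friendly'' starting vertices, and then run a Lov\'asz--Simonovits (LS) style analysis from each $s \in \tS$ to extract a low-conductance cut inside $S$.

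First, I would define $\tS$ by a filtering argument on $S$. For each $s \in S$, let $\tau_s$ denote the probability that an $\len$-length walk from $s$ visits $V \setminus S$, and let $\pi_s$ denote the probability mass of $\clip{\wvec{s}{\len}}{1/4}$ lying inside $S$. Using reversibility of the lazy walk (whose stationary distribution is uniform) and the hypothesis $|S| > n/\len^{1/5}$, an averaging argument over $s \in S$ (bounding $\sum_{s \in S}$ quantities via summation swaps across the symmetric transition matrix) lets us retain $\tS$ with $|\tS| \geq |S|/4$, where every $s \in \tS$ simultaneously has $\tau_s$ small and $\pi_s$ bounded below by a constant. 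The $1/4$ slack in the clipping is what makes both conditions simultaneously achievable.

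Next, for each $s \in \tS$, I would apply an LS-curve analysis to the trajectory $\{\wvec{s}{t}\}_{t=0}^T$ with $T = 16\len^{c+1}$. The LS curve is concave in its argument and smooths monotonically in $t$ unless a low-conductance cut exists. The hypothesis $\|\clip{\wvec{s}{\len}}{1/4}\|_2^2 > \len^{-c}$, together with the fact that $\pi_s$ is a constant, forces a quantitative gap between the LS curve at time $\len$ and its stationary line on the coordinates in $S$; this gap cannot be erased over $T = 16\len^{c+1}$ iterations unless some level set at some time $t \leq T$ exhibits conductance at most $O(\sqrt{c\len^{-1/5}\log\len})$ in the restricted walk. (The logarithmic factor is the standard Cheeger-style amplification across $T$ iterations, and the $\len^{-1/5}$ traces back to the size hypothesis on $S$.) I take $P_s$ to be the intersection of this level set with $S$. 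The discoverability condition $\sum_{t < 16\len^{c+1}} \prw{s}{v}{t} \geq 1/(8\len^{c+1})$ then holds automatically, since every $v \in P_s$ lies in a heavy level set of $\wvec{s}{t}$ for some $t < T$, and summing the level-set threshold across these $T$ time steps gives the claimed bound.

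The main obstacle will be the transfer from the hypothesis (about the \emph{unrestricted} walk in $G$) to an LS-style statement about behaviour \emph{inside} $S$, and in particular ensuring that the conductance bound carries over to $|E(P_s, S \setminus P_s)|$ rather than the conductance in $G$. The cleanest route is to run LS on the chain restricted to $S$ (edges exiting $S$ replaced by self-loops), show via a leakage/coupling argument that its clipped-norm profile approximates that of the unrestricted chain up to additive error $O(\tau_s)$, and then invoke the KSS partitioning lemmas as a black box on this restricted chain with the clipped-norm input. Matching the constants so that the $\eta$-error in $\tS$'s definition, the constant fraction $1/4$ in the clipping, and the $1/\len^{1/5}$ in the size of $S$ all cooperate is the quantitative heart of the adaptation.
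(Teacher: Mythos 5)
Your high-level plan — extract a large friendly subset of $S$, then run a Lov\'asz--Simonovits analysis on a Markov chain over $S$ — has the right shape, and you correctly identify the central obstacle (transferring the hypothesis about the unrestricted walk in $G$ to the walk \emph{inside} $S$). But the mechanism you propose for the transfer has a genuine gap that the paper avoids by using a different object.

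You run LS on the \emph{restricted} chain (cut edges replaced by self-loops at $S$), and justify the transfer via a coupling whose error is $O(\tau_s)$, where $\tau_s$ is the probability that an $\len$-length walk from $s$ ever leaves $S$. The problem is that the hypothesis $|S| > n/\len^{1/5}$ and $\|\clip{\wvec{s}{\len}}{1/4}\|_2^2 > \len^{-c}$ give no control on $\tau_s$ whatsoever. The set $S$ can be ``thin'' in $G$ (e.g.\ many vertices with most neighbors outside $S$) while still satisfying both assumptions, in which case $\tau_s$ is close to $1$ for essentially every $s \in S$, your coupling carries $\Omega(1)$ error, and your filtering step returns an empty $\tS$. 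An averaging/summation-swap over the symmetric transition matrix cannot rescue this: $\frac{1}{|S|}\sum_{s\in S}\tau_s$ can be as large as $\Theta(\len\cdot n/|S|)$, which with the size hypothesis is only bounded by $\len^{6/5}\gg 1$. In short, the restricted chain discards the information carried by walks that leave $S$ and return, and the hypothesis does not let you discard those walks.

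The paper instead builds the \emph{projected} chain $M_S$: transitions $e^{(t)}_{u,v}$ for $u,v\in S$ carry the probability of a $t$-step $G$-walk from $u$ to $v$ with all intermediate vertices outside $S$. This chain tracks every return to $S$, so walks that tunnel through $V\setminus S$ are captured rather than lost, and its stationary distribution is uniform on $S$ regardless of how often walks leave $S$. The quantity to control is not $\tau_s$ but the \emph{number of hops} a length-$\len$ walk makes in $M_S$, and this is bounded by Kac's formula (\Lem{kac}), which ties expected walk length to the hop count purely through $n/|S|$. With the size hypothesis, this yields a set $S'$, $|S'|\geq |S|/2$, on which low-hop walks carry total mass $<1/5$; the $1/4$ clipping budget in the hypothesis then precisely ``absorbs'' those low-hop walks, and the residual $\ell_2$-mass reinterpreted as a $2\len$-length return probability gives $\|\projw{s}{\len^{1/5}}\|_\infty \geq 1/(2\len^{c+1})$ (\Lem{hop-infty-norm}). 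Only then is LS applied — and at hop scale $\len^{1/5}$, not at time scale $16\len^{c+1}$, which is where the $\sqrt{c\len^{-1/5}\log\len}$ conductance bound comes from. Your plan has neither the projected chain nor the Kac's-formula step, and without them the $1/4$ clipping slack has nothing to cancel against, so the reduction to an LS-usable infinity-norm bound does not go through.
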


A straightforward application of this lemma leads to the main decomposition theorem.

\begin{theorem}\label{thm:partition}
Suppose there are at least $(1-1/\len^{1/5})n$ vertices $s$ such that
$\|\clip{\wvec{s}{\ell}}{1/4}\|^2_2 > \len^{-c}$. Then, there 
is a partition $\{P_1, P_2, \ldots, P_b\}$ of the vertices such that:
\begin{itemize}
    \item For each $P_i$, there exists $s \in V$ such that: $\forall v \in P_i$, $\sum_{t < 10\len^{c+1}} \prw{s}{v}{t} \geq 1/8\len^{c+1}$.
    \item The total number of edges crossing the partition is at most $8dn\sqrt{c\len^{-1/5}\log \len}$.
\end{itemize}
\end{theorem}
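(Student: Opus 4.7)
My plan is to apply \Lem{partitioning} iteratively, peeling off one discoverable piece at a time. Let $S_0 = \{v \in V : \|\clip{\wvec{v}{\ell}}{1/4}\|^2_2 > \ell^{-c}\}$; by hypothesis $|S_0| \geq (1-1/\ell^{1/5})n$. The clipped-norm condition is an intrinsic property of random walks in $G$, so it is automatically inherited by any subset of $S_0$. For $i = 1, 2, \ldots$: while $|S_{i-1}| > n/\ell^{1/5}$, invoke \Lem{partitioning} on $S_{i-1}$ to obtain $\tS_{i-1} \subseteq S_{i-1}$ with $|\tS_{i-1}| \geq |S_{i-1}|/4$, pick any $s_i \in \tS_{i-1}$, let $P_i := P_{s_i}$ be the discoverable piece guaranteed by the lemma, and set $S_i = S_{i-1} \setminus P_i$. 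Since each $|P_i| \geq 1$, the loop terminates after some number $m$ of extractions, leaving $|S_m| \leq n/\ell^{1/5}$. Declare the final partition to be $\{P_1, \ldots, P_m\}$ together with singleton parts $\{q\}$ for each $q \in Q := V \setminus \bigcup_{i=1}^m P_i$.

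The discoverability bullet is then immediate. Each extracted $P_i$ satisfies it with witness $s_i$ by construction. For every singleton part $\{q\}$, take $s = q$: then $\prw{q}{q}{0} = 1 \geq 1/(8\ell^{c+1})$, so the required sum over $t$ exceeds the threshold. (The upper-limit constants $10$ in the theorem and $16$ in the lemma differ slightly; the value delivered by the lemma is what I use and it suffices.)

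The crux is bounding the number of crossing edges, which I handle via a charging argument ordered by extraction. Any crossing edge between two extracted pieces $P_i$ and $P_j$ with $i < j$ is contained in $E(P_i, S_{i-1} \setminus P_i)$, because $P_j \subseteq S_{j-1} \subseteq S_{i-1} \setminus P_i$; likewise, any edge from an extracted $P_i$ to the leftover $S_m \subseteq S_0$ is already counted there. Summing the lemma's bound gives
\[
\sum_{i=1}^m |E(P_i, S_{i-1} \setminus P_i)| \;\leq\; \sum_{i=1}^m 4d|P_i|\sqrt{c\ell^{-1/5}\log \ell} \;\leq\; 4dn \sqrt{c\ell^{-1/5}\log \ell}.
\]
The remaining crossing edges all touch $V \setminus S_0$ or lie entirely inside $S_m$, so they number at most $d(|V \setminus S_0| + |S_m|) \leq 2dn/\ell^{1/5}$. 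Since $\ell^{-1/5} \leq \sqrt{c\ell^{-1/5}\log \ell}$ once $\ell$ is modestly large, the total is at most $8dn\sqrt{c\ell^{-1/5}\log \ell}$, matching the stated bound.

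The main obstacle is purely bookkeeping: confirming that \Lem{partitioning}'s hypothesis persists through the iteration (it does, because the clipped-norm condition is a property of $G$ alone and is inherited by subsets) and verifying that the charging argument for crossing edges avoids double counting. No further spectral or random-walk analysis is required; the theorem is essentially a direct corollary of iteratively peeling off pieces via the lemma.
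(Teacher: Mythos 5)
Your proposal is correct and follows essentially the same route as the paper: iterate \Lem{partitioning} on the surviving good set until it shrinks below $n/\ell^{1/5}$, then turn the leftover vertices (together with the initially bad ones) into singleton parts, with the cut bound coming from summing the lemma's per-piece bound plus a degree bound on the small leftover. The paper phrases the cut bound as a maintained invariant rather than your post-hoc charging, and uses $p_{s,s}(1)\geq 1/2$ instead of $p_{q,q}(0)=1$ for singletons, but these are cosmetic; you also correctly flag the $16\ell^{c+1}$ versus $10\ell^{c+1}$ mismatch between the lemma and the theorem statement, which is a typo the paper's own proof silently glosses over.
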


\begin{proof} We simply iterate over \Lem{partitioning}. 
Let $T = \{s \ | \ \|\clip{\wvec{s}{\ell}}{1/4}\|^2_2 \leq \len^{-c}\}$. By assumption, $|T| \leq n/\len^{1/5}$.
We will maintain a partition of the vertices $\{T, Q_1, Q_2, \ldots, Q_a, S\}$ with the 
following properties. (1) Each $Q_i$ satisfies the first condition of the theorem.
(2) The total number of edges crossing the partition is at most $4d\sqrt{c\len^{-1/5}\log \len}\sum_{i \leq a} |Q_i| + d|T|$.
We initialize with the trivial partition $\{T, S = V \setminus T\}$.

As long as $|S| > n/\len^{1/5}$, we invoke \Lem{partitioning}. We get a new set $Q \subseteq S$
satisfying the first condition of the theorem, and the number of edges from $Q$ to $S\setminus Q$
is at most $4d\sqrt{c\len^{1/5}\log \len}|Q|$. We add $Q$ to our partition, reset $S = S\setminus Q$,
and iterate.

When this process terminates, $|S| \leq n/\len^{1/5}$. We get the final partition
by removing all edges incident to $S \cup T$. Alternately, every single vertex
in $S \cup T$ becomes a separate set. Note that a single vertex trivially satisfies
the first condition of theorem, since for all $s$, $\wpr{s}{s}{1} \geq 1/2$.
The total number of edges crossing the partition is at most $4dn\sqrt{c\len^{-1/5}\log \len} + 2dn\len^{-1/5}
\leq 8dn\sqrt{c\len^{-1/5}\log \len}$.
\end{proof}

\subsection{Proving \Lem{partitioning}}\label{sec:local-partitioning}

An important tool used to argue about conductances within $S$ is the projected Markov chain. 
These ideas come from the work of Kale-Peres-Seshadhri to analyze random
walks in noisy expanders~\cite{KalePS:13}, and were used by the authors in their previous
paper on one-sided testers for minor-freeness~\cite{KSS:18}.
We closely follow the structure and notation of that paper, and explicitly mention
the differences.

We define the ``projection" of the random walk onto the set $S$.
We define a Markov chain $M_S$, over the set $S$. We retain all transitions
from the original random walk on $G$ that are within $S$, and we denote these by $e^{(1)}_{u, v}$ for every $u$ to $v$ transition in the random walk on $G$.
Additionally, for every $u,v \in S$ and $t \geq 2$,
we add a transition $e^{(t)}_{u,v}$. The probability of this transition
is equal to the total probability of $t$-length walks in $G$ from $u$
to $v$, where all internal vertices in the walk lie outside $S$.

Note that $e^{(t)}_{u,v} = e^{(t)}_{v,u}$. Since $G$ is irreducible and the stationary mass on $S$ is nonzero, all walks eventually reach $S$. 
Thus, for any $u$, $\sum_t \sum_v e^{(t)}_{u,v} = 1$,
so $M_S$ is a symmetric Markov chain. The stationary distribution of $M_S$
is uniform on $S$.
%

For a transition $e^{(t)}_{u, v}$ in $M_S$, define the ``length" of this transition to be $t$. 
For clarity, we use ``hops" to denote the number of steps of a walk in $M_S$,
and retain ``length" for walks in $G$. The length of an $h$ hop random walk in $M_S$ is defined to be the sum of the lengths of the transitions it takes.

We use $\projw{s}{h}$ to denote the distribution of the $h$-hop walk from $s$,
and $\projwp{s}{v}{h}$ to denote the corresponding probability of reaching $v$.
We use $\walkdist{h}$ to denote the distribution of $h$-hop walks starting from the uniform distribution.

The following lemma is crucial for relating walks in $G$ with $M_S$.

\begin{lemma} [Lemma 6.4 of~\cite{KSS:18-full}] \label{lem:kac} 
	$\EX_{W\sim \walkdist{h}}[\textrm{length of $W$}] = hn/|S|$
\end{lemma}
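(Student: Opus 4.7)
\textbf{Proof plan for \Lem{kac}.} The plan is to decompose the total length of an $h$-hop walk $W \sim \walkdist{h}$ into the sum of individual hop lengths, then combine stationarity of the uniform distribution on $S$ under $M_S$ with Kac's return-time formula for the original random walk on $G$. Let $W = (X_0, X_1, \ldots, X_h)$ denote the sequence of vertices visited in $M_S$, and let $T_i$ be the length parameter of the transition $e^{(T_i)}_{X_{i-1}, X_i}$ that $W$ takes at hop $i$. By the definition of $M_S$, the total length of $W$ is exactly $\sum_{i=1}^{h} T_i$, so by linearity of expectation it suffices to show $\EX[T_i] = n/|S|$ for every $i$.

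Since the uniform distribution on $S$ is stationary for $M_S$ (as noted just before the lemma in the excerpt) and $\walkdist{h}$ starts from this stationary distribution, each $X_{i-1}$ is marginally uniform on $S$. Because $T_i$ depends only on $X_{i-1}$ and the transition drawn at hop $i$ (which, given $X_{i-1}$, is independent of the past by the Markov property of $M_S$), the marginal distribution of $T_i$ equals the marginal distribution of $T_1$ when $X_0$ is uniform on $S$. So I only need to evaluate $\EX[T_1 \mid X_0 \sim \mathrm{Unif}(S)]$.

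A single hop of $M_S$ starting from $u \in S$ is, by construction, nothing but the random walk on $G$ from $u$ run until the first time it re-enters $S$ (if the walk self-loops at $u$ or steps to an in-$S$ neighbor on step one, $T_1 = 1$; otherwise the walk wanders through $V \setminus S$ before returning, and $T_1$ counts every step). Hence $T_1$ is exactly the first-return time to $S$ for the $G$-walk. Applying Kac's classical return-time formula to the reversible chain on $G$, whose stationary distribution is uniform on $V$ and so assigns mass $|S|/n$ to $S$, gives $\EX[T_1 \mid X_0 \sim \mathrm{Unif}(S)] = (|S|/n)^{-1} = n/|S|$, which combined with the previous paragraph yields the claim.

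The only potential subtlety is verifying the hypotheses of Kac's formula, namely that the $G$-walk is irreducible and positive recurrent so that first-return times are finite almost surely and have the advertised mean. Irreducibility is immediate from $G$ being connected (applied componentwise, since \Lem{kac} is ultimately used inside a single connected component); positive recurrence is automatic on the finite state space $V$; and laziness guarantees there is no periodicity issue. Everything else is linearity of expectation and the Markov property of $M_S$, so I expect no additional technical obstacles.
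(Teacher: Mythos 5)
Your proof is correct and is the argument the authors have in mind: the paper cites this lemma from~\cite{KSS:18-full} without re-proving it, but the label \texttt{lem:kac} makes clear that Kac's return-time formula for the set $S$ (with $\pi$ uniform on $V$, so $\pi(S)=|S|/n$) is exactly the intended mechanism, combined with stationarity of the uniform distribution under $M_S$ and linearity of expectation over the $h$ hops, just as you do. One minor cosmetic note: Kac's formula requires irreducibility and a finite state space but not aperiodicity, so the appeal to laziness in your last paragraph is superfluous.
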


We come to an important lemma. The conditions in \Lem{partitioning} are on the clipped norms of random walks in $G$,
but the conclusion (regarding the cut) refers to conductances within the projected Markov chain $M_S$.
The following lemma shows that random walks in $M_S$ must also be sufficiently trapped.
This is an analogue of Lemma 6.5 of~\cite{KSS:18-full}, but the proof deviates significantly
because of the use of clipped norms.

\begin{lemma}\label{lem:hop-infty-norm}
There exists a subset $S' \subseteq S$, $|S'| \geq |S|/2$, such that $\forall s \in S', \|\projw{s}{\len^{1/5}}\|_\infty \geq 1/2\len^{c+1}$.
\end{lemma}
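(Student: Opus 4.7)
The plan is to reduce the stated $\ell_\infty$ lower bound to bounding a return probability of the projected chain, via the inequality $\|\projw{s}{\len^{1/5}}\|_\infty \geq \|\projw{s}{\len^{1/5}}\|_2^2 = (M_S^{2\len^{1/5}})_{ss}$ (the last equality uses symmetry of $M_S$ and the fact that $\projw{s}{\len^{1/5}}$ is a probability distribution). It then suffices to exhibit $S' \subseteq S$ of size at least $|S|/2$ on which $(M_S^{2\len^{1/5}})_{ss} \geq 1/(2\len^{c+1})$.

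I would first apply Kac's lemma (\Lem{kac}) with $h = 2\len^{1/5}$ hops: using $|S| > n/\len^{1/5}$, the expected length of a $2\len^{1/5}$-hop $M_S$-walk from the uniform start is at most $2\len^{1/5} \cdot n/|S| \leq 2\len^{2/5}$. Markov's inequality over the starting vertex then picks out a set $S'$ with $|S'| \geq |S|/2$ on which the expected $2\len^{1/5}$-hop length is at most $4\len^{2/5}$, and a second Markov bound yields, for every $s \in S'$, that a $2\len^{1/5}$-hop $M_S$-walk from $s$ has total length at most $2\len$ with probability at least $1 - 2\len^{-3/5}$.

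The main step is translating the clipped $\ell_2$ hypothesis into a bound on $(M_S^{2\len^{1/5}})_{ss}$. Fix $s \in S'$. By reversibility of $M$, $\wvec{s}{2\len}(s) = \|\wvec{s}{\len}\|_2^2 \geq \|\clip{\wvec{s}{\len}}{1/4}\|_2^2 > \len^{-c}$. I would decompose this return probability by the number of $S$-visits made by the $2\len$-step walk: a walk from $s$ to $s$ with $h+1$ visits to $S$ projects to an $h$-hop $M_S$-walk of total length exactly $2\len$, contributing some $p_h \leq (M_S^h)_{ss}$. Walks with $h < 2\len^{1/5}$ correspond to the $M_S$-walk having total length exceeding $2\len$, whose total probability for $s \in S'$ is at most $2\len^{-3/5}$. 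Hence $\sum_{h \geq 2\len^{1/5}} p_h \geq \len^{-c} - 2\len^{-3/5}$. Since $M_S$ inherits laziness from $M$, $(M_S^h)_{ss}$ is monotonically non-increasing in $h$, and a $2\len$-step walk has at most $2\len$ hops; summing thus gives $\sum_{h = 2\len^{1/5}}^{2\len} p_h \leq 2\len \cdot (M_S^{2\len^{1/5}})_{ss}$, which rearranges to $(M_S^{2\len^{1/5}})_{ss} \gtrsim 1/\len^{c+1}$, completing the reduction up to constants.

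I expect the main obstacle to be the subtraction in the previous paragraph: the Markov bound $2\len^{-3/5}$ on few-visit walks exceeds $\len^{-c}$ whenever $c > 3/5$, which is the regime actually relevant to the partitioning argument. Closing this gap will require either a sharper tail estimate on the $2\len^{1/5}$-hop $M_S$-walk length (e.g.\ via higher-moment bounds on individual hop lengths, exploiting that their conditional expectations are uniformly controlled) or a refinement that uses the clipping structure more directly to force the return probability $\wvec{s}{2\len}(s)$ to be concentrated on trajectories making many $S$-visits.
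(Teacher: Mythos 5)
Your initial reductions are in the right spirit and broadly match the paper's: you lower-bound $\|\projw{s}{\len^{1/5}}\|_\infty$ by a diagonal/return quantity (the paper bounds $\sum_{h=\len^{1/5}}^{2\ell}\projwp{s}{s}{h}$ rather than $(M_S^{2\len^{1/5}})_{ss}$, but the effect is the same), and you correctly invoke \Lem{kac} plus Markov over starting vertices to isolate a set $S'$ of size $\geq|S|/2$ on which few-hop behavior is controlled. You also correctly diagnosed the fatal flaw in your final step: the scalar subtraction $\len^{-c} - 2\len^{-3/5}$ is negative for the relevant $c$, so no conclusion follows. This is a genuine gap, not a constant-chasing issue.

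The fix is exactly the one you gestured at but did not execute: use the clipping structure \emph{at the vector level, before squaring}, rather than trying to subtract off a scalar error term afterward. Concretely, the paper defines $y_s(v)$ as the probability that an $\len$-length walk from $s$ ends at $v$ while making at most $\len^{1/5}$ hops. The Kac/Markov step shows that for $s\in S'$ the total mass $\sum_v y_s(v) < 1/5 < 1/4$. Since $0\leq y_s(v)\leq\wpr{s}{\len}{v}$ pointwise and $\|y_s\|_1 < 1/4$, the vector $\wvec{s}{\len}-y_s$ is an admissible candidate in the clipping program, so the hypothesis $\|\clip{\wvec{s}{\len}}{1/4}\|_2^2\geq\len^{-c}$ gives directly $\sum_v(\wpr{s}{\len}{v}-y_s(v))^2\geq\len^{-c}$, with \emph{no} loss. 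One then lower-bounds the probability that a $2\len$-step walk returns to $s$ while making $\geq\len^{1/5}$ hops by $\sum_v(\wpr{s}{\len}{v}-y_s(v))\cdot\wpr{v}{\len}{s}\geq\sum_v(\wpr{s}{\len}{v}-y_s(v))^2\geq\len^{-c}$ (using symmetry $\wpr{v}{\len}{s}=\wpr{s}{\len}{v}\geq\wpr{s}{\len}{v}-y_s(v)$). Splitting this by hop count and using that $\|\projw{s}{h}\|_\infty$ is non-increasing in $h$ and that at most $2\ell$ hops occur gives $\|\projw{s}{\len^{1/5}}\|_\infty\geq\len^{-c}/(2\ell)$. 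The lesson is that the clipped-norm hypothesis is not merely a technical variant of a squared-$\ell_2$ bound; it is precisely strong enough to absorb the few-hop trajectories \emph{before} they are squared, which is what the scalar-subtraction route cannot do.
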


\begin{proof} Consider $\len$-length random walks in $G$ starting from $s \in S$. For any such walk,
we can define the number of hops it makes as the number of vertices in $S$ encountered minus one.

For $h \in \NN$ and $s \in S$, define the event $\cE_{s,h}$ that an $\len$-length walk from $s$ makes $h$ hops.
We will further split this event into $\cF_{s,h}$, when the walk ends at $S$,
and $\cG_{s,h}$, when the walk does not end at $S$. 
A walk that ends in $S$ directly corresponds to an $h$-hop walk in $M_S$.
By \Lem{kac}, $|S|^{-1} \sum_{s \in S} \Pr[\cF_{s,h}]\len \leq hn/|S|$.
Consider any walk in the event $\cG_{s,h}$. If one continued until it ends in $S$,
this gives a walk in $M_S$ with a single additional hop (and a longer length).
Thus, the total probability mass $\Pr[\cG_{s,h}]$ corresponds to walks in $M_S$
that make $(h+1)$ hops and have length at least $\len$.
By \Lem{kac} again, $|S|^{-1} \sum_{s \in S} \Pr[\cG_{s,h}]\ell \leq (h+1)n/|S|$.

Summing these bounds and applying the size bound on $S$, 
$$|S|^{-1} \sum_{s \in S} \Pr[\cE_{s,h}]\len \leq (2h+1)n/|S| \leq \len^{1/5}(2h+1) \Longrightarrow
|S|^{-1} \sum_{s \in S} \Pr[\cE_{s,h}] \leq \len^{-4/5}(2h+1) $$

Now, we sum over $h$ and use the fact that $\len$ is a sufficiently large constant.
$$ |S|^{-1} \sum_{h \leq \len^{1/5}} \sum_{s \in S} \Pr[\cE_{s,h}] \leq \len^{-4/5} \sum_{h \leq \len^{1/5}} (2h+1)
\leq 4\len^{-2/5} < 1/10$$
%
By the Markov bound, there is a set $S'$, $|S'| \geq |S|/2$ such that $\forall s \in S', \sum_{h \leq \len^{1/5}} \Pr[\cE_{s,h}] < 1/5$.

For $v \in V$,
let $y_s(v)$ be the probability that an $\len$-length walk from $s$ to $v$ makes at most $\len^{1/5}$ hops.
Note that $\sum_{v \in S} y_s(v) \leq \sum_{h \leq \len^{1/5}} \Pr[\cE_{s,h}] < 1/5$. 
We now use the clipped norm definition. Since $\|\clip{\wvec{s}{\len}}{1/4}\|^2_2 \geq \len^{-c}$,
$\sum_{v \in V} (\wpr{s}{\len}{v} - y_s(v))^2 \geq \len^{-c}$. This is important,
since we can ``remove" the low hop walks and still have a large norm.

Consider the probability $\alpha$ that a $2\len$-length walk from $s$ back to $s$ 
makes at least $\len^{1/5}$ hops. (Note that this corresponds to walks in $M_S$.)
Clearly, any walk going from $s$ to $v$ in an $\len$-length walk
making at least $\len^{1/5}$ hops and then returning to $s$ in an $\len$-length walk
contributes to this probability. Thus, we can lower bound $\alpha$
by $\sum_{v \in V}(\wpr{s}{\len}{v} - y_s(v))^2 \geq \len^{-c}$. 
Note that all walks considered make at most $2\ell$ hops.

Thus, $\sum_{h \geq \len^{1/5}}^{2\ell} \|\projw{s}{\len^{1/5}}\|_\infty \geq \len^{-c}$.
Since the infinity norm is non-increasing in hops, by averaging, $\|\projw{s}{\len^{1/5}}\|_\infty \geq 1/2\len^{c+1}$.
\end{proof}

The remaining proof of \Lem{partitioning} is almost identical to analogous calculations in Section 6 of~\cite{KSS:18-full}.
Therefore, we move it to the appendix.

\section{Proof of main result}\label{sec:main-results}

Before we show \Thm{completeness} and \Thm{soundness}, we argue about the guarantees
of \estclip. The proofs of the next two claims are relatively routine concentration arguments.
Recall that $T$ is the vertex set constructed in a call to \estclip$(s)$.

\begin{claim} \label{clm:set} Consider any vertex $s$. With probability at least $1-2^{-1/\eps^2}$
over the randomness in \estclip$(s)$: all $v$ such that $\wvec{s}{\ell}(v) \geq 1/\len^7$
are in $T$, and no $v$ such that $\wvec{s}{\ell}(v) \leq 1/\len^8$ is in $T$.
\end{claim}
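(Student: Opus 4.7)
The plan is to observe that each of the $w = \ell^{14}$ walks in $\estclip(s)$ is independent and ends at vertex $v$ with probability $p_v := \wvec{s}{\ell}(v)$, so $w_v \sim \mathrm{Binomial}(w, p_v)$. I then need to establish two high-probability events simultaneously: (a) every \emph{heavy} vertex, i.e.\ $p_v \geq 1/\ell^7$, satisfies $w_v \geq \ell^7/2$ (so it lands in $T$); and (b) every \emph{light} vertex, i.e.\ $p_v \leq 1/\ell^8$, satisfies $w_v < \ell^7/2$ (so it stays out of $T$). Both are standard large-deviation statements and the claim follows at once.

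For (a), fix a heavy $v$, so $\mu := \EX[w_v] = p_v w \geq \ell^7$. The multiplicative Chernoff lower tail gives $\Pr[w_v < \mu/2] \leq \exp(-\mu/8) \leq \exp(-\ell^7/8)$. The number of heavy vertices is at most $\ell^7$ because $\sum_v p_v \leq 1$, so a union bound bounds the failure probability for (a) by $\ell^7 \exp(-\ell^7/8)$.

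For (b) I need to be slightly more careful, since the number of light vertices can be as large as $n$ and I cannot afford a naive union bound that pays a factor of $n$. Instead I will use the multiplicative Chernoff upper tail in factored form: for any single light vertex,
\[
\Pr[w_v \geq \ell^7/2] \;\leq\; \left( \frac{e \cdot w p_v}{\ell^7/2} \right)^{\ell^7/2} \;=\; (2e\ell^7 p_v)^{\ell^7/2}.
\]
Writing $L := \{v : p_v \leq 1/\ell^8\}$, I bound $\sum_{v \in L} p_v^{\ell^7/2} \leq (1/\ell^8)^{\ell^7/2 - 1} \sum_{v \in L} p_v \leq \ell^{-8(\ell^7/2 - 1)}$, whence
\[
\sum_{v \in L} \Pr[w_v \geq \ell^7/2] \;\leq\; (2e\ell^7)^{\ell^7/2} \cdot \ell^{-8(\ell^7/2 - 1)} \;=\; \ell^8 \cdot (2e/\ell)^{\ell^7/2}.
\]

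Combining (a) and (b), the total failure probability is at most $\ell^7 \exp(-\ell^7/8) + \ell^8 (2e/\ell)^{\ell^7/2}$, which is doubly exponentially small once $\ell$ is a large enough constant. Since $\ell \geq \lceil \eps^{-20} \rceil$, we have $\ell^7 \geq \eps^{-140} \gg 1/\eps^2$, so both terms are far below $2^{-1/\eps^2}$ and the claim follows. The only mildly non-routine step is the union bound in (b) — a priori $|L|$ can be as large as $n$ — but the factored Chernoff expression together with the elementary inequality $\sum_v p_v^k \leq (\max_v p_v)^{k-1}\sum_v p_v$ bypasses the need to control $|L|$ directly.
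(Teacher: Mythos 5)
Your proof is correct, and part (a) is essentially identical to the paper's argument (multiplicative Chernoff lower tail for the at most $\ell^7$ heavy vertices, then a union bound). For part (b) you take a genuinely different and arguably cleaner route. The paper splits the light vertices into two regimes: those with $p_v \in [\exp(-\ell/2), \ell^{-8}]$, handled via the ``$\Pr[X \geq t] \leq 2^{-t}$ for $t \geq 2e\mu$'' form of Chernoff together with the observation that there are at most $\exp(\ell/2)$ such vertices; and those with $p_v < \exp(-\ell/2)$, handled by directly computing $\Pr[w_v \leq 1] \geq 1 - p_v^2 w^2$ and noting that $v \in T$ forces $w_v \geq 2$. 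You instead apply the factored Chernoff tail $\Pr[w_v \geq a] \leq (e\mu/a)^a$ uniformly to every light vertex and then collapse the potentially huge union bound via the elementary inequality $\sum_{v} p_v^{k} \leq (\max_v p_v)^{k-1}\sum_v p_v$, avoiding any case split. This buys a single unified estimate $\ell^8 (2e/\ell)^{\ell^7/2}$ for all light vertices at once; the paper's version buys an argument that only invokes Chernoff in the ``easy'' moderate regime and uses a bare-hands binomial computation in the extreme tail, which some readers may find more transparent. Both yield failure probability far below $2^{-1/\eps^2}$ since $\ell \geq \eps^{-20}$, so either works.
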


\begin{proof} Consider $v$ such that $\wvec{s}{\ell}(v) \geq 1/\len^7$. 
Recall that the total number of walks is $w = \len^{14}$.
The expected
value of $w_v$ is at least $\len^{14}/\len^7 = \len^7$. Note that $w_v$ is a sum
of Bernoulli random variables. By a multiplicative Chernoff bound (Theorem 1.1 of~\cite{DuPa-book}),
$\Pr[w_v \leq \len^7/2] \leq \exp(-\len^7/8)$. There are at most $\len^7$ such vertices $v$.
By a union bound over all of them, the probability that some such $v$
is not in $T$ is at most $\len^7\cdot \exp(-\len^7/8) \leq \exp(-\len^6) \leq 2^{-2/\eps^2}$.
(Note that $\len > \eps^{-20}$.)
This proves the first part.

For the second part, consider $v$ such that $\wvec{s}{\ell}(v) \leq 1/\len^8$. 
We split into two cases.

{\bf Case 1, $\wvec{s}{\ell}(v) \geq \exp(-\len/2)$.} The expectation of $w_v$
is at most $\len^{14}/\len^8 = \len^6$. Since $\len^7/2 \geq 2e\len^6$,
by a Chernoff bound (third part, Theorem 1.1 of~\cite{DuPa-book}),
$\Pr[w_v \geq \len^7/2] \leq 2^{-\len^7/2}$. There are at most $\exp(\len/2)$
such vertices $v$. Taking a union bound over all of them,
the probability that any such vertex appears in $T$ is at most $\exp(\len/2)2^{-\len^7/2}
\leq 2^{-\len^5} \leq 2^{-2/\eps^2}$.

{\bf Case 2, $\wvec{s}{\ell}(v) < \exp(-\len/2)$.} For convenience, set $p = \wvec{s}{\ell}(v)$.
The probability that $w_v \leq 1$ is:
\begin{equation}
(1-p)^w + wp(1-p)^{w-1} \geq (1-wp) + wp(1-p(w-1)) = 1 - p^2w(w-1) \geq 1-p^2w^2
\end{equation}
(We use the inequality $(1-x)^r \geq 1-xr$, for $|x| \leq 1, r \in \NN$.)
Thus, the probability that $w_v > 1$ is at most $p^2w^2$. Note that $\len^7/2$ (the
threshold to be placed in $T$) is at least $2$.

Let us take a union bound over all such vertices. We note that $w = \len^{14}$ and $\len > \eps^{-20}$.
The probability that any such $v$ is placed in $T$ is at most 
\begin{equation}
\sum_{v: \wvec{s}{\ell}(v) < \exp(-\len/2)} \wvec{s}{\ell}(v)^2w^2 \leq \len^{28}\exp(-\len/2) \sum_{v} \wvec{s}{\ell}(v)
\leq \exp(-1/\eps^2)
\end{equation}
We union bound over all errors to complete the proof.

\end{proof}

We can now argue about the main guarantee of \estclip.

\begin{claim} \label{clm:estclip} For all vertices $s$, with probability at least $1 - 2^{-1/\eps}$
over the randomness of \estclip$(s)$:
\begin{asparaitem}
    \item If $\|\clip{\wvec{s}{\ell}}{1/4}\|^2_2 < \len^{-8}/400$, then \estclip$(s)$ outputs LOW.
    \item If $\|\clip{\wvec{s}{\ell}}{3/8}\|^2_2 > \len^{-7}$, then \estclip$(s)$ outputs HIGH.
\end{asparaitem}
\end{claim}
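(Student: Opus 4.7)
The plan is to condition on the favorable event of \Clm{set} (which pins down $T$ up to the narrow band $(\len^{-8},\len^{-7})$), then use the clipped-norm hypothesis to bound the mass of $\wvec{s}{\ell}$ on vertices above the appropriate probability threshold, and finally apply a multiplicative Chernoff bound to $\sum_{v\in T}w_v$. On the good event of \Clm{set}, every $v$ with $\wvec{s}{\ell}(v)\ge\len^{-7}$ lies in $T$ and no $v$ with $\wvec{s}{\ell}(v)\le\len^{-8}$ does; the failure cost $2^{-1/\eps^2}$ is absorbed comfortably into the target probability $1-2^{-1/\eps}$.

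For the LOW direction I would set $A=\{v:\wvec{s}{\ell}(v)>\len^{-8}\}$, so that on the good event $T\subseteq A$ and hence $\sum_{v\in T}w_v\le\sum_{v\in A}w_v$. Since every $v\in A$ carries mass at least $\len^{-8}$ while the total is $1$, $|A|<\len^{8}$. Writing $\by=\clip{\wvec{s}{\ell}}{1/4}$, Cauchy--Schwarz yields $\sum_{v\in A}\by(v)\le\sqrt{|A|}\,\|\by\|_2\le\sqrt{\len^{8}\cdot \len^{-8}/400}=1/20$, and since clipping discards only $1/4$ of $\ell_1$-mass, $\sum_{v\in A}\wvec{s}{\ell}(v)\le 1/4+1/20=3/10<1/3$. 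A multiplicative Chernoff bound on $\sum_{v\in A}w_v$ with $w=\len^{14}$ walks then forces $\sum_{v\in T}w_v<w/3$ (i.e.\ the output is LOW) except with probability $\exp(-\Omega(w))$.

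For the HIGH direction I would first establish the structural fact that any optimal $\xi$-clipping of a probability vector is a uniform cap: there exists $\tau\ge 0$ with $\by(v)=\min(\wvec{s}{\ell}(v),\tau)$ and $\|\wvec{s}{\ell}-\by\|_1=\xi$. Applied with $\xi=3/8$, this yields $\|\by\|_\infty\le\tau$ and $\|\by\|_1=5/8$, so the chain $\len^{-7}<\|\by\|_2^2\le \|\by\|_\infty\|\by\|_1\le (5/8)\tau$ forces $\tau>\len^{-7}$. Hence every vertex whose mass is clipped lies in $B=\{v:\wvec{s}{\ell}(v)\ge\len^{-7}\}$, and the full $3/8$ units of removed mass sit inside $B$. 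Thus $\sum_{v\in B}\wvec{s}{\ell}(v)\ge 3/8$, and since $T\supseteq B$ on the good event, $\EX[\sum_{v\in T}w_v]\ge 3w/8$; a Chernoff bound then delivers HIGH.

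The main obstacle is the clipping analysis in the HIGH direction: one must argue (by a water-filling / KKT argument that respects the constraint $\by\le \wvec{s}{\ell}$) that the optimum is a uniform cap, which localizes the removed mass on a single super-level set of $\wvec{s}{\ell}$. Once this clipping lemma is in hand, both directions reduce to a clean inclusion/exclusion between $T$ and the corresponding level set, and a union bound over the failure events (\Clm{set} plus the two Chernoff estimates) sums to well below $2^{-1/\eps}$.
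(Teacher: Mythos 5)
Your proposal is correct and follows the same overall scaffolding as the paper's proof: condition on the favorable event of Claim~\ref{clm:set} so that $H'=\{v:\wvec{s}{\ell}(v)\geq\len^{-7}\}\subseteq T\subseteq\{v:\wvec{s}{\ell}(v)>\len^{-8}\}$, bound the probability mass of $\wvec{s}{\ell}$ on the relevant level set using the clipped-norm hypothesis, and finish with a Chernoff bound plus a union bound. The LOW direction is essentially identical (the paper phrases it as a contrapositive Jensen bound, you phrase it as a direct Cauchy--Schwarz bound, but they compute the same $1/4+1/20$).

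Where you deviate — and where you overestimate the difficulty — is the HIGH direction. You flag the uniform-cap (water-filling/KKT) characterization of the optimal clipping as ``the main obstacle,'' prove it, and then chase $\tau$ through the chain $\len^{-7}<\|\by\|_2^2\leq\|\by\|_\infty\|\by\|_1\leq(5/8)\tau$ to localize the clipped mass on $H'$. That is correct, but the paper shows none of it is needed: it suffices to observe that \emph{if} $\sum_{v\in H'}\wvec{s}{\ell}(v)<3/8$, then deleting all mass on $H'$ is a \emph{feasible} $3/8$-clipping (one need not know it is optimal), and the resulting vector has $\|\cdot\|_\infty\leq\len^{-7}$, hence $\|\cdot\|_2^2\leq\|\cdot\|_\infty\|\cdot\|_1\leq\len^{-7}$, so the minimizing clipping satisfies $\|\clip{\wvec{s}{\ell}}{3/8}\|_2^2\leq\len^{-7}$ — a contradiction. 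This one-line contrapositive uses only the definition of $\clip{\cdot}{\cdot}$ as a minimum over all feasible clippings, not its explicit structure, and sidesteps the optimization lemma entirely. Both routes yield $\sum_{v\in H'}\wvec{s}{\ell}(v)\geq3/8$ and then finish identically; yours is strictly more machinery for the same conclusion.
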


\begin{proof} Consider the first case.
Let $H = \{v \ | \wvec{s}{\ell}(v) \geq \len^{-8}\}$. We first argue
that $\sum_{v \in H} \wvec{s}{\ell}(v) \leq 1/4 + 1/20$. Suppose not.
Then, any clipping of $1/4$ of the probability mass of $\wvec{s}{\ell}$
leaves at least $1/20$ probability mass on $H$. The size of $H$ is at most $\ell^8$.
By Jensen's inequality, $\|\clip{\wvec{s}{\ell}}{1/4}\|^2_2 \geq 1/400\len^8$, contradicting
the case condition.

Thus, $\sum_{v \in H} \wvec{s}{\ell}(v) \leq 1/4 + 1/20$. The expected value
of $\sum_{v \in H} w_v \leq w(1/4 + 1/20)$. By an additive
Chernoff bound (first part, Theorem 1.1 of~\cite{DuPa-book}), $\Pr[\sum_{v \in H} w_v \geq w/3] \leq \exp(-2(1/3-1/4-1/20)^2w)
\leq \exp(-\len^{13})$. By \Clm{set}, with probability at least $1-2^{-1/\eps^2}$, 
$T \subseteq H$. By a union bound, with probability at least $1-2^{-1/\eps}$,
$\sum_{v \in T} w_v \leq \sum_{v \in H} w_v < w/3$, and the output is LOW.

Now for the second case. Let $H' = \{v \ | \wvec{s}{\ell}(v) \geq \len^{-7} \}$. 
We will show that $\sum_{v \in H} \wvec{s}{\ell}(v) \geq 3/8$. Suppose not.
We can clip away all the probability mass of $\wvec{s}{\ell}$ that is on $H$, which is at most $3/8$.
All remaining probability/entries of the clipped vector are at most $\len^{-7}$.
Thus, the squared $l_2$-norm is at most $\len^{-7}$, implying $\|\clip{\wvec{s}{\ell}}{3/8}\|^2_2 \leq \len^{-7}$ (contradiction).

Thus, $\sum_{v \in H'} \wvec{s}{\ell}(v) \geq 3/8$. 
By an additive Chernoff bound (first part,
Theorem 1.1 of~\cite{DuPa-book}), $\Pr[\sum_{v \in H} w_v < w/3] \leq \exp(-2(3/8-1/3)^2w) \leq \exp(-\len^{13})$.
By \Clm{set}, with probability at least $1-2^{-1/\eps^2}$, $H' \subseteq T$.
By a union bound, with probability at least $1-2^{-1/\eps}$,
$\sum_{v \in T} w_v \geq \sum_{v \in H'} w_v \geq w/3$, and the output is HIGH.
\end{proof}

We now prove completeness, \Thm{completeness}. We will prove that 
if $G$ is $H$-minor-free, then the tester \isminorfree{} accepts with probability $> 2/3$.
This follows almost directly from \Lem{clip:new}.

\begin{proof}[Proof of \Thm{completeness}] Suppose $G$ is $H$-minor-free. 
Note that calls to \localsearch{} can never return FOUND, so rejection
can only happen because of the output of calls to \estclip.

By \Lem{clip:new},
there are at least $(1-1/\len)n$ vertices such that $\|\clip{\wvec{s}{\ell}}{3/8}\|^2_2 \geq \len^{-7}$.
Call these vertices \emph{heavy}. The expected number of light vertices in the multiset
$S$ chosen in \Step{S} of \isminorfree{} is at most $1/\len \times \len^{21} = \len^{20}$.
By a multiplicative Chernoff bound (Theorem 1 of~\cite{DuPa-book}), the number
of light vertices in $S$ is strictly less than $2\len^{20}$ with probability
at least $1-\exp(-\len^{19}) > 9/10$. Let us condition on this event. 
The probability that any call to \estclip$(s)$ returns HIGH for a heavy $s \in S$
is at least $1-2^{-1/\eps}$, by \Clm{estclip}. By a union bound over the at most $\len^{21}$
heavy vertices in $S$, all calls to \estclip$(s)$ for heavy $s \in S$
return HIGH with probability at least $1-\len^{21}2^{-1/\eps} > 9/10$. 

We now remove the conditioning.
With probability $> (9/10)^2 > 2/3$, there are strictly less than $2\len^{18}$ calls (for the light vertices) that return
LOW. When this happens, \isminorfree{} accepts.
\end{proof}

Now we prove soundness, \Thm{soundness}. We prove that if $G$
is $\eps$-far from $H$-minor-freeness, the tester rejects with probability $> 2/3$.
The main ingredient is the decomposition of \Thm{partition}.

\begin{proof}[Proof of \Thm{soundness}] Assume $G$ is $\eps$-far from being $H$-minor free.
We split into two cases.

{\bf Case 1:} There are less than $(1-1/\len^{1/5})n$ vertices such that $\|\clip{\wvec{s}{\ell}}{1/4}\|^2_2 > \len^{-9}$.

Then, there are at least $n/\len^{1/5}$ vertices such that $\|\clip{\wvec{s}{\ell}}{1/4}\|^2_2 \leq \len^{-9}$.
The expected number of such vertices (with repetition) in the multiset $S$ (of \Step{S}) is at least
$\len^{21}/\len^{1/5}$. By a multiplicative Chernoff bound, there are at least $\len^{21}/2\len^{1/5} > 2\len^{20}$
such vertices in $S$, with probability at least $1-\exp(-\len^{20}/4)$. For each such vertex $s$,
the probability that \estclip$(s)$ outputs LOW is at least $1-2^{-1/\eps}$ (\Clm{estclip}). 
By a union bound over all vertices in $S$, with probability $> (1-\exp(-\len^{20}))(1-\len^{21}2^{-1/\eps}) > 5/6$,
there are at least $2\len^{20}$ calls to \estclip$(s)$ that return LOW. So the tester rejects.

{\bf Case 2:} There are at least $(1-1/\len^{1/5})n$ vertices such that $\|\clip{\wvec{s}{\ell}}{1/4}\|^2_2 > \len^{-9}$.

We apply the decomposition of \Thm{partition} (with $c=9$). There is a partition $\{P_1, P_2, \ldots, P_b\}$
of the vertices such that:
\begin{asparaitem}
    \item For each $P_i$, there exists $s \in V$ such that: $\forall v \in P_i$, $\sum_{t < 10\len^{10}} \prw{s}{v}{t} \geq 1/8\len^{10}$.
    Call $s$ the \emph{anchor} for $P_i$, noting that multiple sets may have the same anchor.
    \item The total number of edges crossing the partition is at most $24dn\sqrt{\len^{-1/5}\log \len}$.
\end{asparaitem}

\medskip

Among the sets in the partition, let $\{Q_1, Q_2, \ldots, Q_a\}$ be the sets of vertices
that contain an $H$-minor (or technically, the subgraphs induced by these sets contain an $H$-minor).
Note that one can remove $d\sum_{i \leq a} |Q_i| + 24dn\sqrt{\len^{-1/5}\log \len}$ edges
to make $G$ $H$-minor-free. Since $\len > \eps^{-20}$, $24dn\sqrt{\len^{-1/5}\log \len} \leq \eps nd/2$.
Since $G$ is $\eps$-far from being $H$-minor free, we deduce from the above 
that $\sum_{i \leq a} |Q_i| \geq \eps n/2$.

Let $Z = \{s \ | \textrm{$s$ is anchor for some $Q_i$}\}$. Let us lower bound $|Z|$. 
For every $Q_i$,
there is some $s \in Z$ such that $\forall v \in Q_i$, $\sum_{t < 10\len^{10}} \prw{s}{v}{t} \geq 1/8\len^{10}$.
Thus, for every $Q_i$, there is some $s \in Z$ such that $\sum_{v \in Q_i} \sum_{t < 10\len^{10}} \prw{s}{v}{t} 
\geq |Q_i|/8\len^{10}$. 
Let us sum over all $s \in Z$ (and note that $\sum_{v \in V} \prw{s}{v}{t} = 1$).
\begin{equation}
    \sum_{i \leq a} |Q_i|/8\len^{10} \leq \sum_{s \in Z} \sum_{v \in V} \sum_{t < 10\len^{10}} \prw{s}{v}{t}
    \leq \sum_{t < 10\len^{10}} \sum_{s \in Z} \sum_{v \in V} \prw{s}{v}{t} \leq 10\len^{10}|Z|
\end{equation}
Since $\sum_{i \leq a} |Q_i| \geq \eps n/2$, $|Z| \geq \eps n/160\len^{20} \geq 5n/\len^{21}$.

Focus on the multiset $S$ in \Step{S} of \isminorfree. Note that $S$ contains
an element of $Z$ with probability $\geq 1 - (1-5/\len^{21})^{\len^{21}} \geq 9/10$.
Let us condition of this event, and let $s \in S \cap Z$. There exists some $Q_i$
such that $\forall v \in Q_i$, $\sum_{t < 10\len^{10}} \prw{s}{v}{t} \geq 1/8\len^{10}$.
By averaging over walk length, $\forall v \in Q_i$, $\exists t < 10\len^{10}$ such
that $\prw{s}{v}{t} \geq 1/80\len^{20}$. 

Now, consider the call to \localsearch$(s)$. The set $B_s$ in \Step{walk} of \localsearch{}
is constructed by performing $\len^{21}$ random walks of length $\len^{11}$.
For any $v \in Q_i$, the probability that $v$ is in $B_s$
is at least $1 - (1-1/80\len^{20})^{\len^{21}} \geq 1 - \exp(-\len/80)$.
Taking a union bound over all $v \in Q_i$, the probability
that $Q_i \subseteq B_s$ is at least $1-\len^{21}\exp(-\len/80) \geq 9/10$.
When $Q_i \subseteq B_s$, then $G[B_s]$ contains an $H$-minor and 
the tester rejects. The probability of this happening is at least $(9/10)^2 > 2/3$.
\end{proof}

\bibliographystyle{alpha}
\bibliography{minor-freeness}

\newcommand{\etalchar}[1]{$^{#1}$}
\begin{thebibliography}{HKNO09}

\bibitem[AST90]{AST90}
Noga Alon, Paul Seymour, and Robin Thomas.
\newblock A separator theorem for nonplanar graphs.
\newblock {\em Journal of the American Mathematical Society}, 3(4):801--808,
  1990.

\bibitem[BSS08]{BSS08}
I.~Benjamini, O.~Schramm, and A.~Shapira.
\newblock Every minor-closed property of sparse graphs is testable.
\newblock In {\em Symposium on the Theory of Computing (STOC)}, pages 393--402,
  2008.

\bibitem[CGR{\etalchar{+}}14]{C14}
Artur Czumaj, Oded Goldreich, Dana Ron, C~Seshadhri, Asaf Shapira, and
  Christian Sohler.
\newblock Finding cycles and trees in sublinear time.
\newblock {\em Random Structures \& Algorithms}, 45(2):139--184, 2014.

\bibitem[Die10]{R-book}
Reinhard Diestel.
\newblock {\em Graph Theory, Fourth Edition}.
\newblock Springer, 2010.

\bibitem[DP09]{DuPa-book}
D.~P. Dubhashi and A.~Panconesi.
\newblock {\em Concentration of measure for the analysis of randomized
  algorithms}.
\newblock Cambridge, 2009.

\bibitem[EHNO11]{EHNO11}
Alan Edelman, Avinatan Hassidim, Huy~N. Nguyen, and Krzysztof Onak.
\newblock An efficient partitioning oracle for bounded-treewidth graphs.
\newblock In {\em Workshop on Randomization and Computation (RANDOM)}, pages
  530--541, 2011.

\bibitem[FLVW17]{FLVW:17}
Hendrik Fichtenberger, Reut Levi, Yadu Vasudev, and Maximilian W{\"{o}}tzel.
\newblock On testing minor-freeness in bounded degree graphs with one-sided
  error.
\newblock {\em CoRR}, abs/1707.06126, 2017.

\bibitem[Gol17]{G17-book}
O.~Goldreich.
\newblock {\em Introduction to Property Testing}.
\newblock Cambridge University Press, 2017.

\bibitem[GR99]{GR99}
O.~Goldreich and D.~Ron.
\newblock A sublinear bipartite tester for bounded degree graphs.
\newblock {\em Combinatorica}, 19(3):335--373, 1999.

\bibitem[GR02]{GR02}
O.~Goldreich and D.~Ron.
\newblock Property testing in bounded degree graphs.
\newblock {\em Algorithmica}, 32(2):302--343, 2002.

\bibitem[HKNO09]{HKNO}
A.~Hassidim, J.~Kelner, H.~Nguyen, and K.~Onak.
\newblock Local graph partitions for approximation and testing.
\newblock In {\em Foundations of Computer Science (FOCS)}, pages 22--31, 2009.

\bibitem[HT74]{HT74}
John Hopcroft and Robert Tarjan.
\newblock Efficient planarity testing.
\newblock {\em Journal of the ACM (JACM)}, 21(4):549--568, 1974.

\bibitem[KKR12]{KKR:12}
Ken{-}ichi Kawarabayashi, Yusuke Kobayashi, and Bruce~A. Reed.
\newblock The disjoint paths problem in quadratic time.
\newblock {\em J. Comb. Theory, Ser. {B}}, 102(2):424--435, 2012.

\bibitem[KPS13]{KalePS:13}
Satyen Kale, Yuval Peres, and C.~Seshadhri.
\newblock Noise tolerance of expanders and sublinear expansion reconstruction.
\newblock {\em {SIAM} J. Comput.}, 42(1):305--323, 2013.

\bibitem[KSS18a]{KSS:18}
Akash Kumar, C.~Seshadhri, and Andrew Stolman.
\newblock Finding forbidden minors in sublinear time: {A} o(n\({}^{\mbox{1/2 +
  o(1)}}\))-query one-sided tester for minor closed properties on bounded
  degree graphs.
\newblock In {\em Foundations of Computer Science (FOCS)}, pages 509--520,
  2018.

\bibitem[KSS18b]{KSS:18-full}
Akash Kumar, C.~Seshadhri, and Andrew Stolman.
\newblock Finding forbidden minors in sublinear time: a o(n\({}^{\mbox{1/2 +
  o(1)}}\))-query one-sided tester for minor closed properties on bounded
  degree graphs.
\newblock {\em CoRR}, abs/1805.08187, 2018.

\bibitem[Kur30]{K30}
K.~Kuratowski.
\newblock Sur le probl\`{e}me des courbes gauches en topologie.
\newblock {\em Fundamenta Mathematica}, 15:271--283, 1930.

\bibitem[LR15]{LR15}
Reut Levi and Dana Ron.
\newblock A quasi-polynomial time partition oracle for graphs with an excluded
  minor.
\newblock {\em ACM Transactions on Algorithms (TALG)}, 11(3):24, 2015.

\bibitem[LS90]{LS:90}
L{\'a}szl{\'o} Lov{\'a}sz and Mikl{\'o}s Simonovits.
\newblock The mixing rate of markov chains, an isoperimetric inequality, and
  computing the volume.
\newblock In {\em Foundations of Computer Science (FOCS)}, pages 346--354,
  1990.

\bibitem[NS13]{NS13}
Ilan Newman and Christian Sohler.
\newblock Every property of hyperfinite graphs is testable.
\newblock {\em SIAM Journal on Computing}, 42(3):1095--1112, 2013.

\bibitem[RS95a]{RS:12}
N.~Robertson and P.~D. Seymour.
\newblock Graph minors. {XII}. {D}istance on a surface.
\newblock {\em Journal of Combinatorial Theory Series B}, 64(2):240--272, 1995.

\bibitem[RS95b]{RS:13}
N.~Robertson and P.~D. Seymour.
\newblock Graph minors. {XIII}. {T}he disjoint paths problem.
\newblock {\em Journal of Combinatorial Theory Series B}, 63(1):65--110, 1995.

\bibitem[RS04]{RS:20}
N.~Robertson and P.~D. Seymour.
\newblock Graph minors. {XX}. {W}agner's conjecture.
\newblock {\em Journal of Combinatorial Theory Series B}, 92(1):325--357, 2004.

\bibitem[Spi]{Sp-notes}
D.~Spielman.
\newblock Lecture notes on spectral graph theory.
\newblock http://www.cs.yale.edu/homes/spielman/eigs/.

\bibitem[ST12]{ST12}
D.~Spielman and S.-H. Teng.
\newblock A local clustering algorithm for massive graphs and its application
  to nearly-linear time graph partitioning.
\newblock {\em SIAM Journal on Computing}, 42(1):1--26, 2012.

\bibitem[Wag37]{W37}
K.~Wagner.
\newblock \"{U}ber eine eigenschaft der ebenen komplexe.
\newblock {\em Mathematische Annalen}, 114:570--590, 1937.

\bibitem[YI15]{YI:15}
Yuichi Yoshida and Hiro Ito.
\newblock Testing outerplanarity of bounded degree graphs.
\newblock {\em Algorithmica}, 73(1):1--20, 2015.

\end{thebibliography}

\begin{appendix}

\section{Local partitioning, and completing the proof of~\Lem{partitioning}}

We perform local partitioning on $M_S$, starting with an arbitrary $s \in S'$.
We apply the Lov\'{a}sz-Simonovits curve technique. (The definitions are originally from~\cite{LS:90}.
Refer to Lecture 7 of Spielman's notes~\cite{Sp-notes} as well as Section 2 in Spielman-Teng~\cite{ST12}.
This is also a restatement of material in Section 6.1 of~\cite{KSS:18-full}, which
is needed to state the main lemma.)

\begin{itemize}
    \item Conductance: for some $T \subseteq S$ we define the conductance of $T$ in $M_S$ to be 
    $$ \Phi(T) = \frac{\sum_{\substack{u \in T \\ v \in S \setminus T}}\projwp{u}{v}{1}} { \min\left\{|S\setminus T|, |T|\right\} }$$
    \item Ordering of states at time $t$: At time $t$,
	let us order the vertices in $M_S$ as $v^{(t)}_1, v^{(t)}_2, \ldots$ such that
	$\projwp{s}{v^{(t)}_1}{t} \geq \projwp{s}{v^{(t)}_2}{t} \ldots$, breaking ties by vertex id.
    At $t=0$, we set $\projwp{s}{s}{0} = 1$, and all other values to $0$.
    \item The LS curve $h_t$: We define a function $h_t:[0,|S|] \to [0,1]$ as follows. For every $k \in [|S|]$, set $h_t(k) = \sum_{j \leq k} \projwp{s}{v^{(t)}_j}{t}$. (Set $h_t(0) = 0$.)
	For every $x \in (k,k+1)$, we linearly interpolate to construct $h(x)$.
	Alternately, $h_t(x) = \max_{\vec{w} \in [0,1]^{|S|}, \|\vec{w}\|_1 = x} \sum_{v \in S} [\projwp{s}{v}{t} - 1/n]w_i$.
    \item Level sets: For $k \in [0,|S|]$, we define the $(k,t)$-level set, $L_{k,t}$ to be $\{v^{(t)}_1, v^{(t)}_2, \ldots, v^{(t)}_k\}$.
    The \emph{minimum probability} of $L_{k,t}$ denotes $\projwp{s}{v^{(t)}_k}{t}$.
\end{itemize}

The main lemma of Lov\'{a}sz-Simonovits is the following (Lemma 1.4 of \cite{LS:90}, also refer to Theorem 7.3.3 of Lecture 7
in~\cite{Sp-notes}).

\begin{lemma} \label{lem:ls} 
	For all $k$ and all $t$,
	$$ h_t(k) \leq \frac{1}{2}[h_{t-1}(k - 2\min(k,n-k)\Phi(L_{k,t})) + h_{t-1}(k + 2\min(k,n-k)\Phi(L_{k,t}))]$$
\end{lemma}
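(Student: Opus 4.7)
The plan is to use the classical Lovász--Simonovits flow-averaging technique, adapted here to the projected chain $M_S$ rather than to the original random walk on $G$. Reversibility with uniform stationary distribution on $S$ gives the symmetry $M_S(u,v) = M_S(v,u)$, which will be the workhorse algebraic identity throughout. Fix $t$ and $k$; write $A := L_{k,t}$, $\phi := \Phi(A)$, and $q := 2\min(k,|S|-k)\phi$. The first step is to expand one step of $M_S$ to obtain
\[
h_t(k) = \sum_{v\in A}\projwp{s}{v}{t} = \sum_{(u,v):\,v\in A} M_S(u,v)\,\projwp{s}{u}{t-1},
\]
and to split this into the interior contribution (both endpoints in $A$) and the boundary contribution (one endpoint on each side). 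By the definition of conductance, the total boundary $M_S$-mass equals $\phi\cdot\min(k,|S|-k)=q/2$ in each direction.

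The heart of the proof is to rewrite each boundary edge's flow as an average of two halves. One half will be combined, via the symmetry $M_S(u,v)=M_S(v,u)$, with a matching unit of interior mass to represent flow into a ``shrunken'' set $A_-\subseteq A$ of size $k-q$; the other half combines with a matching unit of exterior mass to represent flow into a ``grown'' set $A_+\supseteq A$ of size $k+q$. Concretely, I would construct $A_-$, $A_+$, and a fractional transport plan that redistributes boundary mass against matching interior/exterior vertices so that, after cancellations using reversibility, one obtains the pointwise inequality
\[
2\,h_t(k) \;\le\; \sum_{v\in A_-}\projwp{s}{v}{t-1} + \sum_{v\in A_+}\projwp{s}{v}{t-1}.
\]
The conclusion then follows from the extremal definition of $h_{t-1}$: for any set $T\subseteq S$, $\sum_{v\in T}\projwp{s}{v}{t-1}\le h_{t-1}(|T|)$, with linear interpolation used to define $h_{t-1}$ at non-integer arguments.

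The principal obstacle will be carrying out the combinatorial flow redistribution cleanly when $q$ is not an integer, which is the generic case since $\phi$ is a real number. The standard workaround is to allow $A_-$, $A_+$, and the transport plan to be \emph{fractional}, viewing each boundary edge's contribution as continuous mass that can be subdivided arbitrarily; the piecewise-linear extension of $h_{t-1}$ built into its definition then allows the inequality to be read off at arguments $k\pm q$ even when these are not integers. Once this setup is in place, the remainder of the argument is a bookkeeping exercise that uses the symmetric bilinear structure of $M_S$ to verify that the rearranged flow equals the interior plus boundary sums, thereby establishing the recursive bound.
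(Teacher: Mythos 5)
The paper does not actually prove this lemma: it is quoted verbatim from Lov\'{a}sz--Simonovits (Lemma 1.4 of \cite{LS:90}) with a pointer to Spielman's lecture notes, so there is no ``paper proof'' to compare against. Your proposal reconstructs the classical LS argument --- expand one step of $M_S$, exploit reversibility, split into interior and boundary contributions, rebalance, and finish with the extremal characterization of $h_{t-1}$ --- which is indeed the right route.

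There is, however, a genuine gap: the argument never invokes the \emph{laziness} of $M_S$, and without it the rebalancing step fails. Concretely, set $c_v = \sum_{u\in A} M_S(u,v)$, so that $h_t(k)=\langle c, p_{t-1}\rangle$ with $\|c\|_1 = k$. The decomposition into two $[0,1]$-valued weight vectors of $\ell_1$-mass $k-q$ and $k+q$ (equivalently, your fractional $A_-$ and $A_+$) requires $c_v\ge 1/2$ for $v\in A$ and $c_v\le 1/2$ for $v\notin A$; both inequalities are exactly the statement that the holding probability at every state of $M_S$ is at least $1/2$. This holds here because $e^{(1)}_{v,v}$ inherits the lazy self-loop of the walk on $G$, but your proof should say so explicitly --- otherwise the claim that ``boundary flow can be matched with interior mass'' has nothing to back it up, and the lemma is in fact false for non-lazy chains (e.g.\ a bipartite walk). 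A secondary issue is that the central redistribution is left entirely schematic (``a fractional transport plan \dots\ that after cancellations \dots''); the cleanest way to make it precise is the weight-vector decomposition $c = \tfrac12(c_1 + c_2)$ with $c_1(v)=\max(2c_v-1,0)$ on $A$ (and $0$ off $A$), $c_2(v)=1$ on $A$ and $2c_v$ off $A$, from which $\|c_1\|_1 = k-q$ and $\|c_2\|_1 = k+q$ follow from $\sum_{v\notin A} c_v = \Phi(A)\min(k,|S|-k) = q/2$. (Also note the lemma as printed writes $n$ where the relevant universe is $S$; this is a typo you should not propagate.)
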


We employ this lemma to prove a condition of the level set conductances. An analogous lemma was proven in~\cite{KSS:18-full}
for specific parameters. We redo the calculation here.

\begin{lemma} \label{lem:level} 
	Suppose there exists $\phi \in [0, 1]$ and $p > 2/n$ such that for all $t' \leq t$ it is true that for all $k \in [n]$ that if $L_{k,t'}$ has a minimum
	probability of at least $p$, then $\Phi(L_{k,t}) \geq \phi$.
	Then for all $k \in [0,n]$, $h_t(k) \leq \sqrt{k}(1-\phi^2/2)^t + pk$.
\end{lemma}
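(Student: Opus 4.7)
The natural strategy is induction on $t$, using \Lem{ls} as the driving recurrence. For the base case $t=0$, the vector $\projw{s}{0}$ puts mass $1$ on $s$ and $0$ elsewhere, so $h_0(k) \leq 1 \leq \sqrt{k}$ for $k \geq 1$ and $h_0(0)=0$, which is dominated by $\sqrt{k}(1-\phi^2/2)^0 + pk$. For the inductive step, I will split on whether $L_{k,t}$ has minimum probability at least $p$.

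\textbf{Case A (min prob $\geq p$).} Here the hypothesis of the lemma applies, so $\Phi(L_{k,t}) \geq \phi$. Let $c = 2\min(k,n-k)\phi$; since $h_{t-1}$ is concave and non-decreasing, \Lem{ls} gives
\[
h_t(k) \;\leq\; \tfrac{1}{2}\bigl[h_{t-1}(k-c) + h_{t-1}(k+c)\bigr].
\]
Apply the inductive hypothesis to both terms. The $pk$ parts recombine exactly (since $p(k-c)+p(k+c)=2pk$), so the task reduces to showing
\[
\tfrac{1}{2}\bigl[\sqrt{k-c} + \sqrt{k+c}\bigr] \;\leq\; \sqrt{k}\,(1-\phi^2/2).
\]
Squaring, one checks $\tfrac14(\sqrt{k-c}+\sqrt{k+c})^2 = \tfrac12(k + \sqrt{k^2-c^2})$. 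Dividing by $k$ and using $\min(k,n-k)/k \geq \phi$ when $k \leq n/2$ (and the symmetric statement for $k>n/2$), the inner factor is at most $\tfrac12(1+\sqrt{1-4\phi^2}) \leq 1-\phi^2$, giving the desired bound after another square-root.

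\textbf{Case B (min prob $< p$).} Let $k^\star$ be the largest index such that the minimum probability of $L_{k^\star,t}$ is at least $p$ (and $k^\star = 0$ if none exists). For $j > k^\star$, each increment $h_t(j)-h_t(j-1) = \projwp{s}{v_j^{(t)}}{t} < p$, so $h_t(k) \leq h_t(k^\star) + p(k-k^\star)$. Case A applied at $k^\star$ yields $h_t(k^\star) \leq \sqrt{k^\star}(1-\phi^2/2)^t + pk^\star$; combining and using $\sqrt{k^\star} \leq \sqrt{k}$ gives the claim.

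\textbf{Expected obstacle.} The slightly delicate point is the calculation $\tfrac12(\sqrt{1-a}+\sqrt{1+a}) \leq 1-\phi^2/2$, which must work uniformly across $k$ even when $k > n/2$; the standard trick is to recognize that $\min(k,n-k)$ enters symmetrically in \Lem{ls}, so the same concavity estimate on the square-root curve applies on either side of $n/2$. The other subtle point is ensuring the cutoff $p > 2/n$ is actually needed somewhere — this comes in making sure the residual "small probability" vertices handled in Case B are not themselves forced to satisfy a conductance bound, and that the pk slack is large enough to absorb the concavity estimate at the joining index $k^\star$.
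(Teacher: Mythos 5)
Your proof follows the same route as the paper's: induction on $t$, applying \Lem{ls} in the case where the conductance hypothesis bites, and a ``plateau'' argument through the concave curve in the other case. The calculations in Case A and Case B are correct, and the pivot index $k^\star$ argument matches the paper's.

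The one place your reasoning wobbles is the ``expected obstacle'' paragraph. You worry that in Case A the estimate $\tfrac12(\sqrt{1-a}+\sqrt{1+a}) \le 1-\phi^2/2$ must be made to ``work uniformly across $k$ even when $k > n/2$'' via some symmetry of \Lem{ls}. There is no such symmetry to invoke, and the proposed fix would not go through: for $k > n/2$ one has $\min(k,n-k)/k < 1$, so $c/k < 2\phi$ and the square-root estimate loses. The correct observation, which the paper makes explicitly and which you gesture at but misattribute, is that the hypothesis $p > 2/n$ makes $k > n/2$ a non-issue. For $k \ge 1/p$ the right-hand side already exceeds $1 \ge h_t(k)$, so the bound is trivial; since $1/p < n/2$ one may assume $k < n/2$ throughout, and then $\min(k,n-k) = k$ always. (Equivalently, in your Case A the hypothesis that $L_{k,t}$ has minimum probability $\ge p$ already forces $kp \le 1$, hence $k \le 1/p < n/2$ automatically, so Case A never sees $k > n/2$.) Your remark that $p > 2/n$ is needed ``to absorb the concavity estimate at the joining index $k^\star$'' is not quite the right diagnosis -- the join at $k^\star$ works for any $p$; the constraint $p > 2/n$ is what guarantees $\min(k,n-k)=k$ in Case A.

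One further small omission: since the claimed inequality is for all real $k \in [0,n]$, one should note, as the paper does, that both $h_t$ and $k \mapsto \sqrt{k}(1-\phi^2/2)^t + pk$ are concave, so it suffices to verify the bound at integer $k$; the linear interpolation then takes care of the rest.
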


\begin{proof} We will prove by induction over $t$. For the base case, consider $t=0$. The RHS
is at least $1$, proving the bound.

Now for the induction. Note that $h_t$ is a concave, and the RHS is also concave. Thus,
it suffices to prove the bound for the integer points ($h_t(k)$ for integer $k$).
If $k \geq 1/p$, then the RHS is at least 1. Thus the bound is trivially true. Let us assume that $k < 1/p< n/2$. We now split the proof into two cases based on the conductance of $L_{k, t}$.

	First let us consider the case where $\Phi(L_{k, t}) \geq \phi$. By \Lem{ls} and concavity of $h$,

	\begin{align}
		h_t(k) &\leq \frac{1}{2} \Big( h_{t-1} \big( k(1 - 2\phi)\big) + h_{t-1}\big(k (1 + 2\phi)\big)\Big) \\
		&\leq \frac{1}{2} \Big(  \sqrt{ k(1 - 2\phi)} (1 - \phi^2/2)^{t -1} +  \sqrt{ k(1 + 2\phi)} (1 - \phi^2/2)^{t -1} + 2kp \Big)\\
		&\leq \frac{1}{2} \Big( \sqrt{k} \left( 1 - \phi^2/2\right)^{t-1}\left( \sqrt{1 - 2\phi} + \sqrt{1 + 2\phi} \right)  + 2kp \Big) \\
		&\leq \sqrt{k}\left( 1 - \phi^2/2\right)^t + kp
	\end{align}
	\noindent For the last inequality we use the bound $\left( \sqrt{1 + z} + \sqrt{1 - z}\right) / 2 \leq 1 - z^2/8$.

Now we deal with the case when $\Phi(L_{k, t}) < \phi$. 
By assumption, $L_{k, t}$ has minimum probability less than $p$. Let $k' < k$ be the largest index such that $L_{k', t}$ has 
minimum probability at least $p$. Note that $\Phi(L_{k', t}) \geq \phi$. Therefore, as proven in the first case, $h_t(k') \leq \sqrt{k'}\left( 1 - \phi^2/2\right)^t + k'p$. Every vertex we add to $L_{k', t}$ adds less than $p$ probability mass to $L_{k', t}$, and therefore, by the concavity of $h_t(x)$,
	\begin{align}
		h_t(k) &\leq h_t(k') + (k - k')p \\
		&\leq \sqrt{k'}\left( 1 - \phi^2/2\right)^t + k'p + (k - k')p \\
		&\leq \sqrt{k'}\left(1 - \phi^2/2\right)^t + kp \leq \sqrt{k}\left(1 - \phi^2/2\right)^t + kp 
	\end{align}
\end{proof}

For convenience, we restate \Lem{partitioning}.

\partition*

\begin{proof} 
By \Lem{hop-infty-norm}, there is a set $S' \subseteq S$, $|S'| \geq |S|/2$ such that for all $s \in S'$, 
$\| \projw{s}{\len^{1/5}}\|_\infty \geq 1/2\len^{c+1}$. Consider any $s \in S'$.

Suppose for all $t' \leq \len^{1/5}$, all level sets $L_{k,t'}$ with minimum probability $1/2\len^{c+1}$
have conductance at least $\sqrt{4c\len^{-1/5}\log \len}$. \Lem{level} implies that
$\| \projw{s}{\len^{1/5}}\|_\infty = h_{\len^{1/5}}(1) \leq (1-2c\len^{-1/5}\log \len)^{\len^{1/5}} + 1/4\len^{c+1}$ 
$< 1/4\len^{c+1} + 1/4\len^{c+1}= 1/2\len^{c+1}$. This is a contradiction.

Thus, for every $s \in S'$, there exists a level set denoted $P_s$
with minimum probability $1/2\len^{c+1}$ and conductance at most $\sqrt{4c\len^{-1/5}\log \len}$.
Note that $|P_s| \leq 2\len^{c+1} < |S|/2$. 
\begin{equation}
\sqrt{4c\len^{-1/5}\log \ell} \geq 
\Phi(P_s) = \frac{\sum_{\substack{x \in P_s \\ y \in S \setminus P_s}} \projwp{x}{y}{1}}{\min(|P_s|, |S \setminus P_s|}
\geq \frac{E(P_s, S\setminus P_s)}{2d|P_s|}
\end{equation}

The inequality is obtained by only considering transitions from $S$ to $S \setminus P_s$ that come
from a single edge in $G$. Each such edge has a traversal probability of $1/2d$.
Therefore, $E(P_s, S\setminus P_s) \leq 4d|P_s|\sqrt{c\len^{-1/5}\log \len}$.

Set $L = 8\len^{c+2}$. Define $\tS \subseteq S'$ to be the vertices $s \in S'$
with the property that $\forall v \in P_s$, $\sum_{l < L} \wpr{s}{v}{l} \geq 1/8\len^{c+1}$.
Together with the cut bound above, this clearly satisfies the conditions
on the lemma. It remains the prove a suitable upper bound of $|S' \setminus \tS|$,
to show that $\tS$ is sufficiently large.

For every $s \in S' \setminus \tS$, there exists $v_s \in P_s$
such that $\sum_{l < L} \wpr{s}{l}{v} < 1/8\len^{c+1}$. Let $\projp{s}{l}{v}$
denote that probability that an $\len^{1/5}$-hop walk in $M_S$
from $s$ reaches $v$ with length $l$. Consider $s \in S' \setminus \tS$.
\begin{equation}
\projwp{s}{v_s}{\len^{1/5}} = \sum_{l \geq \len^{1/5}} \projp{s}{l}{v_s} 
= \sum_{l \geq \len^{1/5}}^{L-1} \projp{s}{l}{v_s} + \sum_{l \geq L} \projp{s}{l}{v_s}
\leq \sum_{l \geq \len^{1/5}}^{L-1} \wpr{s}{l}{v} + \sum_{l \geq L} \projp{s}{l}{v} 
\end{equation}
Since the minimum probability of $P_s$ is at least $1/4\len^{c+1}$,
$\projwp{s}{v_s}{\len^{1/5}} \geq 1/4\len^{c+1}$. 
We argued above that 
$\sum_{l \geq \len^{1/5}}^{L-1} \wpr{s}{l}{v}
\leq \sum_{l < L} \wpr{s}{l}{v} < 1/8\le^{c+1}$.
We conclude that $\sum_{l > L} \projp{s}{l}{v} \geq 1/8\len^{c+1}$.
Note that all of this probability mass corresponds to $\len^{1/5}$-hop walks that
have a large length.
We now lower bound $\EX_{W \sim \cW_{\len^{1/5}}}[\textrm{length of $W$}]$.
\begin{equation}
\EX_{W \sim \cW_{\len^{1/5}}}[\textrm{length of $W$}]
\geq \frac{1}{|S|}\sum_{s \in S' \setminus \tS}\Big(\sum_{l > L} \projp{s}{l}{v_s}\Big) L
\geq \frac{|S' \setminus \tS|}{|S|} \cdot \frac{L}{8\len^{c+1}} \geq \frac{\ell |S' \setminus \tS|}{|S|}
\end{equation}
By \Lem{kac}, $\EX_{W \sim \cW_{\len^{1/5}}}[\textrm{length of $W$}] = \len^{1/5}n/|S|$.
Combining, $|S' \setminus \tS| \leq n/\len^{4/5} \leq n/4\len^{1/5}
\leq |S|/4$. 
By \Lem{hop-infty-norm}, $|S'| \geq |S|/2$. By the setting of
\Lem{partitioning},
$|S| > n/\len^{1/5}$. Thus, $|S' \setminus \tS| \leq n/4\len^{1/5}$, and $|\tS| \geq |S|/4$.
\end{proof}

\end{appendix}

\end{document}